		\ifnum\bookmarkget{level}=1%
\definecolor{plot0}{HTML}{004488}
\definecolor{plot1}{HTML}{DDAA33}
\definecolor{plot2}{HTML}{BB5566}
\definecolor{plot3}{HTML}{000000}
\definecolor{plot4}{HTML}{AAAAAA}
\definecolor{plot5}{HTML}{00BB44}
\newcommand*{\inv}[1]{\ensuremath{#1^{-1}}}
\newcommand*{\positive}[1]{\ensuremath{\left[#1\right]^{+}}}
\newcommand*{\negative}[1]{\ensuremath{\left[#1\right]^{-}}}
\newcommand*{\avg}[1]{\ensuremath{\bar{#1}}}
\newcommand*{\diff}{\mathop{}\!\mathrm{d}}
\DeclareMathOperator*{\argmax}{arg\,max}
\DeclareMathOperator{\Ei}{Ei}
\newcommand*{\expect}[2][]{\ensuremath{\mathbb{E}_{#1}\left[#2\right]}}
\newcommand*{\berndist}{\ensuremath{\mathcal{B}}}
\newcommand*{\poisdist}{\ensuremath{\mathrm{Pois}}}
\let\rv\bm
\newcommand*{\cdf}{\ensuremath{F}}
\newcommand*{\X}{\ensuremath{\rv{X}}}
\newcommand*{\Y}{\ensuremath{\rv{Y}}}
\newcommand*{\ratebob}{\ensuremath{{R_{\textnormal{B}}}}}
\newcommand*{\rateeve}{\ensuremath{{R_{\textnormal{E}}}}}
\newcommand*{\rateskg}[1][]{\ignorespaces
	\if\relax\detokenize{#1}\relax
	{\ensuremath{\rv{R}_{\textnormal{SK}}}}%
	\else
	{\ensuremath{R_{\textnormal{SK}, #1}}}%
	\fi
}
\newcommand*{\ly}{\ensuremath{\lambda_{\textnormal{E}}}}
\newcommand*{\channelbob}{\ensuremath{\rv{H}_{\textnormal{B}}^2}}
\newcommand*{\channeleve}{\ensuremath{\rv{H}_{\textnormal{E}}^2}}
\newcommand*{\txpower}{\ensuremath{P_{\textnormal{T}}}}
\newcommand*{\txpowermax}{\ensuremath{P_{\textnormal{T,max}}}}
\newcommand*{\messagelength}{\ensuremath{L}}
\let\budget\surplus
\newcommand*{\netusage}{\ensuremath{\rv{Z}}}
\newcommand*{\netsum}{\ensuremath{\rv{S}}}
\newcommand*{\probruin}{\ensuremath{\psi}}
\newcommand*{\probsurv}{\ensuremath{\bar{\psi}}}
\newcommand*{\initbudget}{\ensuremath{b_0}}
\newcommand*{\probtx}{\ensuremath{p}}
\newcommand*{\probtxcrit}{\ensuremath{p_{\textnormal{crit}}}}
\newcommand*{\txindicator}{\ensuremath{\rv{M}}}
\newcommand*{\alertoutprob}{\ensuremath{\varepsilon}}
\newcommand*{\alertoutprobmax}{\ensuremath{\tilde{\varepsilon}}}
\newcommand*{\minbudget}{\ensuremath{b_{\alertoutprobmax}}}
\newcommand*{\alertduration}{\ensuremath{\rv{T}}}
\newcommand*{\resiloutageprob}{\ensuremath{\alpha}}
\newcommand*{\resiloutageprobmax}{\ensuremath{\tilde{\resiloutageprob}}}
\newcommand*{\weight}{\ensuremath{w}}
\newcommand*{\adaptee}{\ensuremath{g}}
\newcommand*{\reward}{\ensuremath{r}}
\pgfplotsset{compat=newest}
\pgfplotsset{%
	betterplot/.style={
		width=.93\linewidth,
		height=.27\textheight,
		xlabel near ticks,
		ylabel near ticks,
		cycle list name=lineplot cycle,
		mark options=solid,
		xmajorgrids=true,
		xminorgrids=true,
		ymajorgrids=true,
		grid style={line width=.1pt, draw=gray!20},
		major grid style={line width=.25pt,draw=gray!30},
		legend cell align=left,
		legend style = {
			/tikz/every even column/.append style={column sep=0.33cm}
		},
	},
}
\tikzset{
block/.style={draw},
}
\newcommand{\todo}[2][]{\ignorespaces\leavevmode
	\if\relax\detokenize{#1}\relax
	{\color{red}[TODO: #2]}%
	\else
	{\color{red}[TODO (#1): #2]}%
	\fi
}
\definecolor{change}{HTML}{0096b8}
\theoremstyle{plain}%
\newtheorem{thm}{Theorem}
\newtheorem{lem}{Lemma}
\newtheorem{prop}{Proposition}
\theoremstyle{definition}
\newtheorem*{prob*}{Problem Formulation}
\theoremstyle{remark}
\newtheorem{rem}{Remark}
\newtheorem*{rem*}{Remark}
\newtheoremstyle{example}{\topsep}{\topsep}{}{}{\itshape}{.}{ }{}
\theoremstyle{example}
\newtheorem{example}{Example}
\newtheorem*{example*}{Example}
	\titlespacing{\section}{0pt}{1.5ex plus 1.5ex minus 0.5ex}{0.7ex plus 1ex minus 0ex} %
	\titlespacing{\subsection}{0pt}{1.5ex plus 1.5ex minus 0.5ex}{0.7ex plus .5ex minus 0ex} %
	\titlespacing{\section}{0pt}{3.0ex plus 1.5ex minus 1.5ex}{0.7ex plus 1ex minus 0ex} %
	\titlespacing{\subsection}{0pt}{3.5ex plus 1.5ex minus 1.5ex}{0.7ex plus .5ex minus 0ex} %
	\def\thesubsubsectiondis{\arabic{subsubsection})}
	\titleformat{\subsubsection}[runin]{\itshape}{\thesubsubsectiondis}{.5em}{}[:]
	\titlespacing*{\subsubsection}{\parindent}{0ex plus 0.1ex minus 0.1ex}{1ex}
\title{Building Resilience in Wireless Communication Systems With a Secret-Key Budget}
\author{%
Karl-Ludwig Besser, \IEEEmembership{Member, IEEE}, Rafael F. Schaefer, \IEEEmembership{Senior Member, IEEE}, and\\H. Vincent Poor, \IEEEmembership{Life Fellow, IEEE}%
\thanks{Parts of this work were presented at the 2024 IEEE International Symposium on Personal, Indoor and Mobile Radio Communications (PIMRC)~\cite{Besser2024pimrc}.}
\thanks{Karl-Ludwig Besser was with the Department of Electrical and Computer Engineering, Princeton University, Princeton, NJ 08544, USA, and is now with the Department of Electrical Engineering, Linköping University, Sweden (email: karl-ludwig.besser@liu.se).
Rafael F. Schaefer is with the Chair of Information Theory and Machine Learning, the BMBF~Research Hub 6G-life, the Cluster of Excellence \enquote{Centre for Tactile Internet with Human-in-the-Loop} (CeTI), and the 5G Lab Germany, Technische Universität Dresden, 01062 Dresden, Germany (e-mail: rafael.schaefer@tu-dresden.de).
H. Vincent Poor is with the Department of Electrical and Computer Engineering, Princeton University, Princeton, NJ 08544, USA (email: poor@princeton.edu).%
}%
\thanks{The work of K.-L.~Besser is supported by Security~Link and was supported by the German Research Foundation (DFG) under grant~BE\,8098/1-1.
The work of R. F. Schaefer is supported in part by the German Federal Ministry of Education and Research (BMBF) within the National Initiative on 6G~Communication Systems through the Research Hub \emph{6G-life} under grant~16KISK001K, in part by the BMBF within the 6G-ANNA project under grant~16KISK103, in part by the DFG as part of Germany's Excellence Strategy~EXC\,2050/1 (Project ID~390696704), Cluster of Excellence \emph{\enquote{Centre for Tactile Internet with Human-in-the-Loop} (CeTI)} at Technische Universität Dresden, and in part by the DFG under Grant~SCHA1944/11-1.
The work of H.~V. Poor is supported by the U.S.\ National Science Foundation under grant~ECCS-2335876.}%
}
\begin{document}
\maketitle

\begin{abstract}\noindent\boldmath
Resilience and power consumption are two important performance metrics for many modern communication systems, and it is therefore important to define, analyze, and optimize them.
In this work, we consider a wireless communication system with secret-key generation, in which the secret-key bits are added to and used from a pool of available key bits.
We propose novel physical layer resilience metrics for the survivability of such systems.
In addition, we propose multiple power allocation schemes and analyze their trade-off between resilience and power consumption.
In particular, we investigate and compare constant power allocation, an adaptive analytical algorithm, and a reinforcement learning-based solution.
It is shown how the transmit power can be minimized such that a specified resilience is guaranteed.
These results can be used directly by designers of such systems to optimize the system parameters for the desired performance in terms of reliability, security, and resilience.
\end{abstract}
\begin{IEEEkeywords}%
	Resilience,
	power control,
	secret-key budget,
	physical layer security,
	ruin theory.
\end{IEEEkeywords}
\glsresetall

\section{Introduction}\label{sec:introduction}
Resilience of a (communication) system refers to its capability to function acceptably in case of failures and malfunctions.
It is an important performance metric for many modern communication networks~\cite{Sterbenz2010}.
In particular, mission-critical systems~\cite{EricsonMissionCritical}, such as smart railways, public safety, and smart power grid systems~\cite{Panteli2015,Moreno2020,Stankovi2023,Ji2017,Xu2024}, require resilience as failures could result in severe damages or even risk people's safety.

In addition to resilience, modern wireless communication systems need to fulfill other, often conflicting, objectives at the same time.
Reliability, resilience, privacy, and security should all be maximized while simultaneously conserving energy and minimizing transmit power.
Especially the security of data transmissions is crucial as more sensitive data is transmitted in modern applications~\cite{Nguyen2021}.

Traditionally, cryptography is used to ensure data confidentiality and integrity.
However, as wireless networks become more prevalent and complex, classical cryptographic approaches face many challenges.
This includes quantum attacks~\cite{Gidney2021howtofactorbit}, key management issues, and computational costs for embedded and small devices, e.g., in the context of the \gls{iot}.
An alternative approach to combat these challenges can be \gls{pls}, which leverages the unique physical properties of the wireless communication channel to establish security~\cite{Bloch2011,Poor2017}.
One aspect of \gls{pls} is \gls{skg} in which the channel can be utilized to securely generate key bits, which then act as a one-time pad to encrypt a transmitted message.
The main idea behind \gls{skg} is to use channel reciprocity to establish a source of common randomness between the transmitter and the legitimate receiver, which can be used to distill key bits~\cite{Li2019skg}.
Different schemes for an implementation of \gls{skg} have been proposed in the literature~\cite{KameniNgassa2017,Li2018skg,Aono2005}.
Besides achieving perfect information-theoretic security, using \gls{skg} has the advantage of being less dependent on complicated key management issues such as key distribution and revocation.
Additionally, using generated key bits as one-time pads provides a built-in resistance against key compromise as each key bit can only be used once.
However, because of this, it is also essential to ensure that the (legitimate) communication parties always have a sufficient amount of key bits available whenever a message is to be transmitted.

This reliability in terms of the probability of running out of \gls{sk} bits, as well as the latency to restore a certain amount of bits, is investigated in~\cite{Besser2024SKGbudget}.
In particular,~\cite{Besser2024SKGbudget}~introduces the perspective of modeling the problem as treating the amount of available key bits as a budget with opposing income and spending processes.
New bits are generated using \gls{skg} techniques and added to the budget, whereas using them to encrypt and transmit messages removes them from the pool of available bits.

In this work, we adapt the same notion of a \gls{sk} budget to analyze the aforementioned communication systems using \gls{skg}.
However, in contrast to~\cite{Besser2024SKGbudget}, which only focuses on reliability, we focus on the \emph{resilience} of such systems in this work.
While resilience has many aspects, it has been mostly considered from a network perspective in communications~\cite{Rak2015,Mauthe2016,Menth2009,AlAqqad2022,Esmat2023}, e.g., in terms of failing links in routing packets through a network.
In contrast, we focus on resilience from a physical layer perspective in this work. %
In particular, we investigate the ability to withstand emergency events and derive necessary preparation steps.
For this, we introduce novel resilience metrics and analyze the influence of the transmit power and highlight the trade-off between energy conservation and resilience.
These metrics are an important step for quantifying the physical layer resilience of modern communication systems.
Existing metrics~\cite{Cholda2009,Hosseini2016} are either focused on a network layer perspective, e.g., packet loss ratio and \gls{mos}, or very general with only broad system level definitions.
A metric for jointly assessing mixed criticality and resilience for physical layer resource management is proposed in \cite{Reifert2023}.
Other previous works that consider resilience on the physical layer often treat resilience as the ability to mitigate possible jamming attacks~\cite{Okyere2020,Restuccia2022,Letafati2020}.

The main contributions of this work are summarized as follows.
\begin{itemize}
	\item To the best of the authors' knowledge, this work is the first comprehensive study of the physical layer resilience of wireless communication systems with secret-key generation in terms of their survivability. %
	\item We review the classical resilience model and adapt it to wireless communication systems with a secret-key budget.
	In particular, we introduce a notion of resilience for such systems and present specific resilience metrics (\autoref{sub:resilience-model} and \autoref{sub:resilience-metric}).
	\item We analyze the resilience and the trade-off between transmit power and resilience of a communication system with a secret-key budget for different power allocation schemes (\autoref{sec:resilience-analysis}).
	\item We illustrate the general results with numerical examples, highlighting the practical use and insights for a system designer (\autoref{sec:numerical-example}).
\end{itemize}
Additionally, the source code to reproduce all presented results and simulations is publicly available at~\cite{BesserGithub}.

\subsubsection*{Notation}
Random variables are denoted in capital boldface letters, e.g.,~$\X$, and their realizations in small letters, e.g.,~$x$.
We use~$F_{\X}$ and~$f_{\X}$ for the probability distribution and its density, respectively.
The expectation is denoted by~$\mathbb{E}$ and the probability of an event by~$\Pr$.
The Bernoulli distribution with mean~$p$ is denoted as~$\berndist(p)$.
As a shorthand, we use ${\positive{x}=\max\left\{x, 0\right\}}$ and ${\negative{x} = \min\left\{x, 0\right\}}$.
An overview of the most commonly used variable notation can be found in \autoref{tab:notation-variables}.
\begin{table}%
	\renewcommand*{\arraystretch}{1.25}
	\centering
	\caption{Definitions of the Most Commonly Used Variables}\label{tab:notation-variables}
	\begin{tabularx}{.9\linewidth}{lX}
		\toprule
		$\channelbob$ & Channel gain to Bob (includes beamforming)\\
		$\channeleve$ & Channel gain to Eve (includes beamforming)\\
		$\txpower$ & Transmit power for \gls{skg}\\
		$\budget$ & Number of available \gls{sk} bits\\
		$\initbudget$ & Number of initially available \gls{sk} bits\\
		$\messagelength$ & Message length\\
		$\rateskg$ & \Gls{skg} rate\\
		$\netusage$ & \Gls{sk} bit usage\\
		$\txindicator$ & Binary random variable that indicates whether a message needs to be transmitted\\
		$\probtx$ & Probability that a message is transmitted\\
		$\alertduration$ & Duration of the alert state\\
		$\probruin_{\initbudget}$ & Ruin probability with initial budget~$\initbudget$\\
		$\probsurv_{\initbudget}$ & Survival probability with initial budget~$\initbudget$\\
		$\alertoutprob$ & Outage probability when entering the alert state\\
		$\alertoutprobmax$ & Target maximum outage probability for the alert state\\
		$\resiloutageprob$ & Resilience outage probability\\
		\bottomrule
	\end{tabularx}
\end{table}
\section{Preliminaries and Background}\label{sec:preliminaries}
In this section, we will review important existing ideas and results from secret-key generation and ruin theory, which will be needed as background information for the remainder of this work.

\subsection{Secret-Key Generation}\label{sub:intro-skg}
One way to achieve perfect information-theoretic secrecy is the use of a one-time pad~\cite{Shannon1949}.
The concept of a one-time pad involves using key bits that are known exclusively to the legitimate communication parties.
Secret-key generation allows these parties to agree on such secret key bits by leveraging the physical properties of their communication channel~\cite{Bloch2021}.
Various models and algorithms have been proposed and analyzed in the literature for different communication scenarios~\cite{KameniNgassa2017,Li2018skg,Aono2005,Furqan2016,Gao2021,Linh2023}, including static environments~\cite{Aldaghri2020}, quasi-static fading channels~\cite{Renna2013}, and fast-fading channels with correlated channels~\cite{Zorgui2016,Besser2020wsa}.

The fundamental idea behind secret-key agreement is that Alice and Bob share access to a common randomness, which they can use to extract identical key bits~\cite{Maurer1993}.
To correct errors and protect against eavesdroppers, they exchange messages over a public channel.
Prominent models for secret-key agreement include the \gls{cm} and the \gls{sm} schemes~\cites[Chap.~4]{Bloch2011}{Lai2014}, each using distinct approaches to generating random observations.
In the \gls{sm} model~\cite{Wallace2010,Maurer1993}, both legitimate nodes and potential eavesdroppers access a common source of randomness described by a joint distribution.
Conversely, in the \gls{cm} scheme, randomness stems from transmissions over a noisy wiretap channel~\cite{Ahlswede1993}.
Utilizing the public channel, secret-key bits are derived from observed randomness at the legitimate nodes in such a way that the eavesdropper remains uninformed.

In this work, we do not focus on any particular \gls{skg} scheme or the generation of specific key bits.
Instead, we investigate the number of available key bits over time and associated resilience notions.
In particular, we use the concept of a secret-key budget~\cite{Besser2024SKGbudget}.
As described above, the legitimate nodes perform \gls{skg} and agree on secret-key bits using a \gls{skg} scheme.
These newly generated bits are appended to the pool of existing key bits, establishing a reservoir of available \gls{sk} bits at both legitimate communicating parties.
During secure transmission of a message with length~$\messagelength$, the oldest $\messagelength$~key bits are utilized as a one-time pad for encryption.
Given that solely the legitimate transmitter and receiver possess knowledge of the key bits, the transmission achieves information-theoretic security.
However, due to the nature of one-time pads, the \gls{sk} bits employed for transmission can only be used once and are therefore removed from the list of available key bits.

\subsection{Ruin Theory}\label{sub:intro-ruin-theory}
Ruin theory, originally rooted in economics and actuarial science, serves as an analytical tool for assessing the solvency of insurance companies~\cite{Asmussen2020}.
At its core, this theory deals with the two conflicting cash flows within an insurance company.
On the one hand, there is an income stream from the premiums paid by the customers.
Conversely, claims arise sporadically, which decrease the company's budget.

In the classical model, premiums arrive at a constant positive rate, while claims occur randomly, often modeled by a Poisson process.
The main quantity of interest is the probability that the insurance company will go bankrupt.

For this, we define the time of ruin~$\tau$, denoting the first time~$t$ at which the budget~$\budget$ falls to zero~\cite{Dickson2016}, i.e., ${\tau = \inf \{t \geq 0 \;|\; \budget(t) \leq 0 \}}$.
The probability that ruin occurs before a given time~$t$ is referred to as the \emph{ruin probability}~$\probruin_{\initbudget}$,%
\begin{equation}\label{eq:def-prob-ruin}
	\probruin_{\initbudget}(t)=\Pr(\tau \leq t)\,,
\end{equation}
where the insurance company starts with an initial budget~$\initbudget$.
The \emph{survival probability}~$\probsurv_{\initbudget}(t)=1-\probruin_{\initbudget}(t)$ is defined as the complementary event, i.e., the probability that the system's budget does not go to zero within $t$~time slots.
Similarly the \emph{probability of ultimate ruin}~$\probruin_{\initbudget}(\infty)$ is given as the limit that ruin will eventually occur, i.e., $\probruin_{\initbudget}(\infty) = \Pr(\tau < \infty)$.

Ruin theory has been extensively studied in the literature of mathematical finance and actuarial science, where many different problems have been discussed and explicitly solved, including expressions for finite-time ruin probability when considering specific claim distributions~\cite{Chan2006,Picard1997} or when considering interest rates~\cite{Cai2002,Wang2006interest}.
For a comprehensive exposition on the subject, we refer the reader to~\cite{Asmussen2020,Dickson2016}.

The system model considered in this work shows parallels to the insurance company model in classical ruin theory.
However, due to inherent differences between financial and communication systems, we cannot directly apply existing results from ruin theory.
Instead, we will leverage concepts and definitions from ruin theory and adapt them to the specific problem considered in this work.
The system model and the exact problem formulation are discussed in the following section.
\section{System Model and Problem Formulation}\label{sec:system-model}

Throughout this work, we consider a communication system, in which a transmitter~(Alice) wants to securely transmit data to a legitimate receiver~(Bob).
The transmission is overheard by a passive eavesdropper~(Eve).
We assume that all channels are quasi-static fading channels with additional \gls{awgn} at the receivers~\cite[Chap.~5.2]{Bloch2011}, i.e., we assume that the channels remain constant for the transmission of one codeword.
Throughout the following, we normalize the noise power to one to simplify the notation.
The channel gains of the channel between Alice and Bob, and the channel between Alice and Eve in time slot~$t$ are denoted as $\channelbob(t)$ and $\channeleve(t)$, respectively.
They are assumed to be independent random variables, which also change independently over time.
Since we consider a stationary scenario, we assume that beamforming and antenna gains are constant over time.
As they simply scale the channel gains~$\channelbob$ and~$\channeleve$, we assume the distributions of these channel gains to include all additional effects like beamforming and antenna gains.
The transmitter transmits at power level~$\txpower(t)$ in time slot~$t$.
Thus, the \gls{snr} values of the main channel and the eavesdropper's channel at time~$t$ are given as $\X(t)=\txpower(t)\channelbob(t)$ and $\Y(t)=\txpower(t)\channeleve(t)$, respectively.

Based on this model, we can achieve the following rates to Bob and Eve at time~$t$
\begin{align}
	\rv{\ratebob}(t) &= \log_2\big(1+\X(t)\big)\\
	\rv{\rateeve}(t) &= \log_2\big(1+\Y(t)\big)\,,
\end{align}
respectively.
Furthermore, we assume that the system has a maximum power constraint~$\txpowermax$, i.e., ${\txpower(t)\leq\txpowermax}$. %

An overview of the system model can be found in \autoref{fig:system-model}.
An application example is later discussed in detail in \autoref{sub:application-examples}.

\begin{figure}
	\centering
	\begin{tikzpicture}[font=\small]

\node[coordinate] (message) {};
\node[block,right=of message] (xorAlice) {XOR};
\draw[->] (message) -- node[above=.1] {Message} (xorAlice);

\node[block,above=.75 of xorAlice] (budgetAlice) {SK Budget $\budget$};
\draw[->] (budgetAlice) -- node[right] {Key} (xorAlice);

\node[block,right=.5 of xorAlice] (tx) {TX};

\node[above=.3 of tx,inner sep=0pt] (pt) {$\txpower$};
\draw[->] (pt) -- (tx);
\draw[->] (xorAlice) -- (tx);

\node[block,right=.7 of tx] (hb) {$\channelbob$};
\draw[->] (tx) -- (hb);
\node[block,below=.5 of hb] (he) {$\channeleve$};
\draw[->] ($(tx)!.5!(hb)$) |- (he);

\node[block,right=.7 of hb] (rx) {RX};
\draw[->] (hb) -- (rx);

\node[block,right=.5 of rx] (xorBob) {XOR};
\draw[->] (rx) -- (xorBob);

\node[block,above=.75 of xorBob] (budgetBob) {SK Budget $\budget$};
\draw[->] (budgetBob) -- node[left] {Key} (xorBob);
\draw[<->] (budgetAlice) -- node[above] {\Gls{skg}} (budgetBob);
\draw[->] (pt) -| ($(budgetAlice)!.5!(budgetBob)$);

\node[coordinate,right=of xorBob] (messageRec) {};
\draw[->] (xorBob) -- node[above=.1] {Message} (messageRec);

\node[draw,dashed,rounded corners,fit=(message)(tx)(budgetAlice),label={above:Alice}] (alice) {};
\node[draw,dashed,rounded corners,fit=(messageRec)(rx)(budgetBob),label={above:Bob}] (bob) {};

\node[right=of he] (messageEve) {\xcancel{Message}};
\draw[->] (he) -- (messageEve);

\node[draw,dashed,rounded corners,fit=(messageEve),label={below:Eve}] (eve) {};
\end{tikzpicture}
	\vspace*{-.5em}
	\caption{%
	Overview of the system model.
	Alice and Bob generate key bits and add them to the \gls{sk} budget.
	Before each transmission, key bits from the budget are used as one-time pad to encrypt the message, such that Eve is not able to decrypt it.
	For the sake of clarity, we omit the fact that the communication channels are also used for secret-key generation (SKG).
	Therefore, it is also affected by $\channelbob$, $\channeleve$, and $\txpower$, cf.~\eqref{eq:skg-rate}.
	}%
	\label{fig:system-model}
\end{figure}

\subsection{Secret-Key Generation and Secret-Key Budget}
For a secure data transmission, Alice and Bob use key bits as an one-time pad for encryption.
In order to generate these key bits confidentially, \gls{skg} techniques can be used.
By leveraging physical properties of the propagation channels, Alice and Bob can confidentially generate bits, which are only known to them and not to Eve.
For this work, we assume that the standard source model method for \gls{awgn} wiretap channels is used.
The \gls{skg} rate in time slot~$t$ for the considered model is given by~\cite[Chap.~5.1]{Bloch2011} %
\begin{equation}\label{eq:skg-rate}
	\rateskg(t) = \log_2\left(\frac{1+\X(t)+\Y(t)}{1+\Y(t)}\right)\,.
\end{equation}

Newly generated bits are added to a pool of available key bits.
Whenever a message of length~$\messagelength$ is transmitted, the oldest $\messagelength$ bits from this pool are used as a one-time pad~\cite{Lugrin2023}.
Because of this, each \gls{sk} bit can only be used exactly once and they are then removed from the set of available key bits.
This idea of adding and removing key bits implies the notion of a \emph{secret-key budget}, which keeps track of the number of available \gls{sk} bits at each time.
The development of this budget~$\budget$ over time can be mathematically described as
\begin{equation}\label{eq:def-budget}
	\budget(t) = \initbudget - \sum_{i=1}^{t} \netusage(i) = \initbudget - \netsum(t)\,,
\end{equation}
where $\initbudget=\budget(0)$ is the initial budget, $\netusage(i)$ is the usage of \gls{sk} bits in time slot~$i$, and $\netsum(t)=\sum_{i=1}^{t}\netusage(i)$ is the accumulated usage until time slot~$t$.

Throughout this work, we assume that messages with a constant length~$\messagelength$ arrive randomly with probability~$\probtx$ in each time slot.
If a message arrives in time slot~$t$, it is encrypted and immediately transmitted, i.e., removing $\messagelength$ bits from the budget.
If no message arrives, \gls{skg} is performed and new key bits are added to the budget.
This yields the following expression for the usage~$\netusage$
\begin{equation}\label{eq:def-net-usage}
	\netusage(t) =
	\begin{cases}
		-\rateskg(t) & \text{if } \txindicator(t) = 0\\
		\messagelength & \text{if } \txindicator(t) = 1
	\end{cases}
\end{equation}
where $\txindicator(t)\sim\berndist(\probtx)$ is the binary random variable that indicates whether a message arrived in time slot~$t$, i.e., $\probtx = \Pr\big(\txindicator(t) = 1\big)$.
Note that the \emph{usage}~$\netusage$ is negative whenever bits are added to the budget.
An illustration of the scheduling model can be found in \autoref{fig:model-time-scheme}.

\begin{figure}
	\centering
	\begin{tikzpicture}%
	\begin{axis}[
		width=.96\linewidth,
		height=.2\textheight,
		axis line style={draw=none},
		xlabel={Time Slot~$t$},
		xlabel near ticks,
		x tick label as interval,
		ylabel near ticks,
		ylabel={{\Gls{sk} Bit Usage~$\netusage(t)$}},
		ytick={0,.75},
		yticklabels={$0$, $\messagelength$},
		xmin=.99,
		xmax=10.2,
		xtick distance=1,
		ymin=-1,
		ymax=1.3,
		axis on top,
		]
		\plot[black!50,samples=2,domain=0:12] {0};
		\draw[plot2,thick,fill=plot2,fill opacity=.3] foreach \n in {1, ..., 4, 6, 7, 9}{(axis cs:\n,-1) rectangle (axis cs:\n+1,1)};
		\draw foreach \n in {1, ..., 4, 6, 7, 9}{(axis cs:\n+0.5, 1) node[anchor=south,font=\small] {SKG}};
		
		\draw[black,thick,dashed] foreach \n in {1, ..., 15}{(axis cs:\n, -2) -- (axis cs:\n, 2)};
		
		\draw[plot3,thick,fill=plot3,fill opacity=.3] foreach \n in {5, 8, 10}{(axis cs:\n,-1) rectangle (axis cs:\n+1,1)};
		\draw foreach \n in {5,8,10}{(axis cs:\n+0.5, 1) node[anchor=south,font=\small] {TX}};
		
		\addplot[black, ultra thick, const plot] table[x=x, y=y] {
			x	y
			1	-0.25
			2	-0.7
			3	-.1
			4	-.5
			5	.75
			6	-.4
			7	-.75
			8	.75
			9	-.33
			10	.75
			11	.75
		};
	\end{axis}
\end{tikzpicture}
	\caption{%
	Illustration of the scheduling model in the normal state.
	In each time slot~$t$, there is a probability~$\probtx$ that a message of length~$\messagelength$ is transmitted.
	If no message is transmitted, \gls{skg} is performed instead, which adds key bits to the \gls{sk} budget.
	}%
	\label{fig:model-time-scheme}
\end{figure}

\begin{example}[{\Gls{sk} Budget}]\label{ex:sk-budget}
In order to illustrate the idea of the \gls{sk} budget, we use the following example, which is also depicted in \autoref{fig:illustration-sk-budget}.
Up until time slot~$t$, Alice and Bob have already generated six key bits~$(0, 1, 1, 0, 0, 1)$, i.e., $\budget(t-1)=6$.
We now assume that \gls{skg} is performed in time slot~$t$ as illustrated in \autoref{fig:illustration-sk-budget-skg}.
This generates four new key bits~$(1, 0, 1, 1)$, i.e., $\netusage(t)=-4$.
It should be recalled that $\netusage$ describes the \emph{usage} of key bits, i.e., it is negative whenever new bits are generated.
The four new key bits are then appended to the existing key bits, increasing the total number of available key bits at the end of time slot~$t$ to $\budget(t)=10$.

In the next time slot~$t+1$, a message of length~$L=5$ arrives, which needs to be transmitted securely.
Therefore, Alice uses the oldest $L$~\gls{sk} bits as a one-time pad to encrypt the message, i.e., performing XOR of the message and key bits.
In the shown example in \autoref{fig:illustration-sk-budget-tx}, the resulting encrypted message~$(1, 0, 1, 1, 0)$ is then transmitted to Bob, who uses the same oldest $L$~\gls{sk} bits from the budget to decode the message.
In order to guarantee perfect secrecy, each one-time pad can only be used once.
Therefore, the used $L$~key bits are removed from the \gls{sk} budget at the end of time slot~$t+1$.
This reduces the number of available key bits to $\budget(t+1)=5$.

\begin{figure}
	\centering
	\subfigure[{Illustration of the \gls{skg} phase. In this example, four new key bits ($1$, $0$, $1$, $1$) are generated and added to the \gls{sk} budget.\label{fig:illustration-sk-budget-skg}}]{\centering\begin{tikzpicture}
	\foreach \i [count=\xi] in {0,1,1,0,0,1}{%
		\node[draw] (skAlice\xi) at (.5*\xi, 0) {\i};
		\node[draw] (skBob\xi) at (.5*\xi, -2) {\i};
	}
	\foreach \i [count=\xi] in {1, 0, 1, 1}{%
		\node[draw,plot2,thick] (skAliceNew\xi) at (.5*\xi+3, 0) {\i};
		\node[draw,plot2,thick] (skBobNew\xi) at (.5*\xi+3, -2) {\i};
		\node[draw,plot2] (skCenterNew\xi) at (.5*\xi+1.5, -1) {\i};
	}
	\node[left=.1 of skCenterNew1,outer sep=.2] (skg) {\Gls{skg}};
	
	\node[left=.1 of skAlice1] (skAlice) {\Gls{sk} Bits};
	\node[left=.1 of skBob1] (skBob) {\Gls{sk} Bits};
	
	\node[fit=(skAlice)(skAliceNew4),draw,thick,inner sep=.5em,label={Alice}] (alice) {};
	\node[fit=(skBob)(skBobNew4),draw,thick,inner sep=.5em,label={below:Bob}] (bob) {};

	\draw[<->,thick] (alice.south -| skg.west) -- (bob.north -| skg.west);
	\draw[<->,thick] ($(alice.south -| skCenterNew4.east) + (.2, 0)$) -- ($(bob.north -| skCenterNew4.east) + (.2, 0)$);
	
\end{tikzpicture}}
	\subfigure[{Illustration of the \gls{tx} phase. The message of length~$\messagelength=5$ is encrypted using the $\messagelength$ oldest key bits as a one-time pad.\label{fig:illustration-sk-budget-tx}}]{\centering\begin{tikzpicture}
	\foreach \i [count=\xi] in {0,1,1,0,0}{%
		\node[draw,plot2,densely dashed,thick] (skAlice\xi) at (.5*\xi, 0) {\i};
		\node[draw,plot2,densely dashed,thick] (skBob\xi) at (.5*\xi, -3.75) {\i};
		\node[draw,plot2] (skCenter\xi) at (.5*\xi+1.25, -1.75) {\i};
	}
	\node[left=.1 of skCenter1] {Key};
	
	\foreach \i [count=\xi] in {1,1,0,1,0}{%
		\node[draw] (message\xi) at (.5*\xi+1.25,-1) {\i};
	}
	\node[left=.1 of message1] {Message};

	\foreach \i [count=\xi] in {1,0,1,1,0}{%
		\node[draw] (tx\xi) at (.5*\xi+1.25,-2.75) {\i};
		\draw[->] (skCenter\xi) -- (tx\xi);
	}
	\node[left=.1 of tx1] {Transmission};

	\foreach \i [count=\xi] in {1, 1, 0, 1, 1}{%
		\node[draw] (skAliceNew\xi) at (.5*\xi+2.5, 0) {\i};
		\node[draw] (skBobNew\xi) at (.5*\xi+2.5, -3.75) {\i};
	}
	
	\node[left=.1 of skAlice1] (skAlice) {\Gls{sk} Bits};
	\node[left=.1 of skBob1] (skBob) {\Gls{sk} Bits};
	
	\node[fit=(skAlice)(skAliceNew5),draw,thick,inner sep=.5em,label={Alice}] (alice) {};
	\node[fit=(skBob)(skBobNew5),draw,thick,inner sep=.5em,label={below:Bob}] (bob) {};

\end{tikzpicture}}
	\caption{Illustration of the \gls{sk} budget idea. If \gls{skg} is performed, new key bits are generated and added to the budget at Alice and Bob. If a message of length~$\messagelength$ arrives, the oldest $\messagelength$~key bits are used as a one-time pad to encrypt it and removed from the budget. (\autoref{ex:sk-budget})}%
	\label{fig:illustration-sk-budget}
\end{figure}
\end{example}

\begin{rem}[{Connection Between Ruin Theory and Our Model}]
	In \autoref{sub:intro-ruin-theory}, we introduced some of the basics of ruin theory, an area traditionally used to analyze the solvency of insurance companies.
	In particular, the basic model in ruin theory considers two opposing cash flows, namely the income of insurance companies from premiums paid by customers on the one hand, and insurance claims paid back to customers on the other.
	While there are some differences in the details, we can still draw parallels to the system model considered in this work.
	In particular, the solvency of our system is given by the \gls{sk} budget~$\budget$, which also observes two opposing flows, namely the newly generated key bits using \gls{skg}, which correspond to the income of our system, and the removed bits due to their use as one-time pads, which corresponds to the payments reducing our budget.
	Therefore, we will leverage concepts and definitions from ruin theory and adapt them to the specific problem considered in this work.
\end{rem}

\subsection{Resilience Model}\label{sub:resilience-model}
\begin{figure*}
	\centering
	\begin{tikzpicture}%
	\begin{axis}[
		betterplot,
		width=.97\linewidth,
		height=.23\textheight,
		xlabel={Time~$t$},
		ylabel={Performance (\Gls{sk} Budget)},
		xmin=0,
		xmax=10,
		ymin=0,
		ymax=7,
		xtick={0, 2, 3.5, 5.5, 8, 10},
		xticklabels={$0$, $t_1$, $t_2$, $t_3$, $t_4$, {}},%
		ytick={0, 1, ..., 8},
		yticklabels={},
		]
		\addplot+[restrict x to domain=0:2] table [x=t,y=x] {data/samples-resilience-model.dat};
		\addplot+[restrict x to domain=2:3.5] table [x=t,y=x] {data/samples-resilience-model.dat};
		\addplot+[restrict x to domain=3.5:5.5] table [x=t,y=x] {data/samples-resilience-model.dat};
		\addplot+[restrict x to domain=5.5:8] table [x=t,y=x] {data/samples-resilience-model.dat};
		\pgfplotsset{cycle list shift=-4};
		\addplot+[restrict x to domain=8:10] table [x=t,y=x] {data/samples-resilience-model.dat};

		\draw[plot0,fill,fill opacity=.15] (axis cs:0, 0) rectangle (axis cs:2, 7);
		\draw[plot1,fill,fill opacity=.15] (axis cs:2, 0) rectangle (axis cs:3.5, 7);
		\draw[plot2,fill,fill opacity=.15] (axis cs:3.5, 0) rectangle (axis cs:5.5, 7);
		\draw[plot3,fill,fill opacity=.15] (axis cs:5.5, 0) rectangle (axis cs:8, 7);
		\draw[plot0,fill,fill opacity=.15] (axis cs:8, 0) rectangle (axis cs:10, 7);

		\draw[|<->|, thick] (axis cs:0,6) -- (axis cs:2,6);
		\node[fill=white, font=\normalsize] at (axis cs:1,6) {Normal State};
		
		\draw[|<->|, thick] (axis cs:2,6) -- (axis cs:3.5,6);
		\node[fill=white, font=\normalsize] at (axis cs:2.75,6) {Alert State};
		\draw[|<->|, thick] (axis cs:2,.5) -- (axis cs:3.5,.5);
		\node[fill=white, font=\normalsize] at (axis cs:2.75,.5) {$\alertduration$};
		
		\draw[|<->|, thick] (axis cs:3.5,6) -- (axis cs:5.5,6);
		\node[fill=white, font=\normalsize,align=center] at (axis cs:4.5,6) {Degraded\\Performance};
		
		\draw[|<->|, thick] (axis cs:5.5,6) -- (axis cs:8,6);
		\node[fill=white, font=\normalsize] at (axis cs:6.75,6) {Restoration};
		
		\draw[|<->|, thick] (axis cs:8,6) -- (axis cs:10,6);
		\node[fill=white, font=\normalsize] at (axis cs:9,6) {Normal State};
	\end{axis}
\end{tikzpicture}
	\vspace*{-.5em}
	\caption{%
		General resilience model with four different phases.
		For the system considered in this work, the performance corresponds to the \gls{sk} budget.
		In the normal state, messages are transmitted with probability~$\probtx$ while \gls{skg} is performed the rest of the time.
		During the alert state, a message is transmitted in each time slot, i.e., $\probtx=1$.
		The duration of the alert state~$\alertduration$ is random but follows a known distribution.
	}%
	\label{fig:resilience-model}
\end{figure*}

In this work, we are interested in the \emph{resilience} of the communication system with a secret-key budget described above.
Throughout the following, the basic model for describing the resilience of a general system depicted in \autoref{fig:resilience-model} is used.
This concept is popular in other research areas, such as in power systems and smart grid~\cite{Panteli2015,Moreno2020,Stankovi2023,Ji2017,Xu2024}.
The operation of the system is categorized into the following four distinct states.

\subsubsection{Normal State}
During the normal operation state, the system functions as specified, and its parameters stay within predefined ranges.
It is important to note that even in this state, there might be fluctuations of the performance.

In the system considered in this work, this state is described above.
In each time slot, a message is transmitted with probability~$\probtx$, while \gls{skg} is performed otherwise.
The performance of the system corresponds to the number of available \gls{sk} bits~$\budget$.

\subsubsection{Alert State}
Besides the normal state, the system can also enter an alert state, which is typically triggered by external events like extreme weather conditions or active attacks.
In the alert state, the system parameters are outside of the normal ranges and the overall performance of the system degrades.

For this work, we assume that the system could enter an externally triggered alert state at any time.
In this state, a message must be transmitted in every time slot, i.e., $\probtx=1$.
The duration of this alert state is denoted as~$\alertduration$, which we assume to be a random variable with a known distribution.

\subsubsection{Degraded Performance}
After the alert state ends, the system's performance might remain in a degraded state.
However, the system should be designed in a way that ensures basic functionality during this period.

In this work, we assume that the system immediately switches to a restoration mode once the alert state is over.
Therefore, the duration of the state of degraded performance is assumed to be zero.

\subsubsection{Restoration}
Once the external circumstances permit restoration of the system, the performance gradually returns to the normal operation state.
At this point, all parameters are back within normal ranges and the system operates as specified again.

While the restoration phase is an important part of the resilience cycle, we do not focus on it in this work due to space limitations.
However, it will be important to consider different strategies for this phase in future work.

\subsection{Resilience Metric}\label{sub:resilience-metric}
In order to quantify the resilience, we first need to introduce some new resilience metrics.
In general, we define the system to fail if it runs out of key bits.
In this work, we are particularly interested in preparing the system for an eventual alert state, i.e., we consider the system to be in the normal state of operation and aim to define appropriate resilience metrics that quantify the preparedness if an alert state would occur in the next time slot.
Given that the system's performance is inherently stochastic, it is also necessary to assess the resilience in terms of probabilities.
In particular, we introduce two new quantities in the following.

\subsubsection{Alert Outage Probability}
The first important aspect of resilience is surviving alert states.
As the duration of the alert state~$\alertduration$ is random, we first introduce the \emph{alert survival probability~$1-\alertoutprob(t)$} as the probability that the system does not run out of \gls{sk} bits when entering the alert state in the next time slot.
Furthermore, we require the system to also survive until time slot~$t$, i.e., it has not run out of key bits up to this point.
From this, we get the definition of~$1-\alertoutprob(t)$ as
\begin{equation}\label{eq:def-alert-outage-prob}
	1-\alertoutprob(t) = \Pr\left(\budget(t) > \sum_{i=t}^{t+\alertduration}L, \min_{0 \leq i \leq t}\budget(i) > 0\right).
\end{equation}
Similarly, we refer to $\alertoutprob(t)$ as the \emph{alert outage probability}.

\subsubsection{Resilience Outage Probability}
Throughout the following, we assume that our application sets a specific target for the alert outage probability~$\alertoutprobmax$.
This means that the system should be designed such that it always has enough \gls{sk} bits during the normal state to only run out of key bits with a probability of at most $\alertoutprobmax$ when entering an alert state.
Equivalently, this means that the system should have enough key bits to survive ${(1-\alertoutprobmax)}$ of the alert states.
Combining the duration of the alert states with the message length yields the amount of bits that should always be available as
\begin{equation}\label{eq:def-min-budget}
	\minbudget = \inv{\cdf_{\alertduration}}\left(1-\alertoutprobmax\right)\cdot\messagelength\,,
\end{equation}
where $\inv{\cdf_{\alertduration}}$ is the quantile function of the alert state duration~$\alertduration$.

Every time the budget~$\budget$ falls below the threshold~$\minbudget$ is considered a violation of this resilience requirement.
We will refer to this event as a resilience outage with the corresponding \emph{resilience outage probability~$\resiloutageprob$}
\begin{equation}\label{eq:def-resilience-outage-prob}
	\resiloutageprob(t) = \Pr\left(\alertoutprob(t) > \alertoutprobmax\right)\,.
\end{equation}

\begin{example}[Resilience Metrics]\label{ex:resilience-metric}
In order to illustrate the introduced resilience metrics, we use the following numerical example.
The system starts with an initial budget~$\budget(0)=\initbudget=\SI{20}{\bit}$ at time~${t=0}$.
With probability~$\probtx=0.25$, a message of length~$\messagelength=\SI{2}{\bit}$ arrives in a time slot, reducing the \gls{sk} budget~$\budget$ by~$L$.
If no message arrives, \gls{skg} is performed and the number of available bits increases.
An exemplary progression of the budget~$\budget(t)$ over time is depicted in \autoref{fig:illustration-resilience-metrics}.

The duration of an alert state~$\alertduration$ is assumed to be distributed according to a Poisson distribution with mean~\num{6}, i.e., ${\alertduration\sim\poisdist(6)}$.
Furthermore, it is specified for the application that it should be able to survive an alert state with a probability of at least~\SI{90}{\percent}, i.e., $\alertoutprobmax=0.1$.
According to \eqref{eq:def-min-budget}, this translates to the target that the number of available key bits~$\budget(t)$ should always stay above~$\minbudget=L\inv{F_{\alertduration}}(1-\alertoutprobmax)=\SI{18}{\bit}$.
Therefore, every time the system's budget drops below that threshold~$\minbudget$, there is a resilience outage.
These events are highlighted by a red background in \autoref{fig:illustration-resilience-metrics}.
For the depicted example, the systems violates the resilience requirement in time slots~\num{17}, \num{18}, and \num{25} to \num{30}.
The probability that such a resilience outage occurs in time slot~$t$ is the resilience outage probability~$\resiloutageprob(t)$ from~\eqref{eq:def-resilience-outage-prob}.

\begin{figure}
	\centering
	\begin{tikzpicture}
\pgfplotsset{
	scale only axis,
	xmin=0,
	xmax=30,
}

\pgfplotstableread[x=time]{data/samples-illustration-resilience-metrics.dat}\tbl

\begin{axis}[
	betterplot,
	width=.68\linewidth,
	height=.19\textheight,
	axis y line*=left,
	ymin=0,
	ymax=30,
	xlabel={Time~$t$},
	ylabel={Budget~$\budget(t)$ [\si{\bit}]},
	extra y ticks={18},
	extra y tick labels={$\minbudget$},
	]
	\addplot+ table[y={budget}] {\tbl};
	\label{plot_budget}
	
	\addplot[plot0, dashed, very thick, domain=0:30, samples=2] {18};
\end{axis}

\begin{axis}[
	width=.68\linewidth,
	height=.19\textheight,
	cycle list name=lineplot cycle,
	axis y line*=right,
	axis x line=none,
	ymin=1e-3,
	ymax=1,
	ymode=log,
	ylabel={Alert Outage Probability~$\alertoutprob(t)$},
	legend pos=south west,
	legend cell align=left,
	ytick={1e-3, 1e-2, 1e-1, 1},
	yticklabels={$10^{-3}$, $10^{-2}$, $\alertoutprobmax$, $10^{0}$},
	ylabel shift=-.2cm,
	]
	\addlegendimage{/pgfplots/refstyle=plot_budget}\addlegendentry{Budget~$\budget$}
	
	\pgfplotsset{cycle list shift=1};
	\addplot+ table[y=alert] {\tbl};
	\addlegendentry{Alert Outage Probability~$\alertoutprob$}
	
	\addplot[plot1, dashed, very thick, domain=0:30, samples=2] {1e-1};
	
	\draw[plot2,fill=plot2,opacity=.2] (16.5, 1e-3) rectangle (18.5, 1);
	\draw[plot2,fill=plot2,opacity=.2] (25, 1e-3) rectangle (30, 1);
\end{axis}
\end{tikzpicture}
	\vspace*{-.5em}
	\caption{%
		Illustration of the relation between \gls{sk} budget, alert outage probability~$\alertoutprob$, and resilience outage events.
		The target maximum alert outage probability is set to $\alertoutprobmax=10^{-1}$.
		The highlighted areas indicate the time slots~$t$ at which the alert outage probability exceeds the threshold, i.e., $\alertoutprob(t)>\alertoutprobmax$.
		(\autoref{ex:resilience-metric})}%
	\label{fig:illustration-resilience-metrics}
\end{figure}
\end{example}

With these newly introduced resilience metrics, we can now summarize the problem formulation for the remainder of this work.
\begin{prob*}
The aim of this work is to analyze the physical layer resilience of the described communication system with a secret-key budget.
In particular, we are interested in simultaneously optimizing the power consumption and resilience outage probability.
\end{prob*}

\subsection{Application Example}\label{sub:application-examples}
After introducing the system model along with the resilience model and appropriate new resilience metrics, we now take a closer look at a specific application example.

Consider a wireless sensor system, in which the transmitter is a small temperature sensor that securely reports the current temperature occasionally (with probability~$\probtx\ll 1$) to a fusion center.
For securing the messages, the system uses \gls{skg} and one-time pads as described above.
The underlying protocol for transmitting the messages could be any standardized transmission protocol with a fixed packet length~$\messagelength$, where each packet contains the current temperature reading as payload.
This corresponds to the normal state of operation. %

If the temperature rises above a certain threshold, it indicates a malfunction of the monitored system and it is crucial for the fusion center to get frequent updates of the temperature readings in this situation.
This state corresponds to the alert state, which is triggered by the temperature crossing a predefined threshold.
It should be noted that this trigger is external and not related to quantities within the communication chain.

In this work, we assume that the update frequency during the alert state is maximal, i.e., a temperature reading is transmitted in each time slot, which corresponds to ${\probtx=1}$.
Once the temperature readings drop below another predefined threshold again, the alert state is considered to be over, and the total duration between the crossings of the thresholds is the length of the alert state~$\alertduration$.

\begin{figure}
	\centering
	\begin{tikzpicture}[
		/pgfplots/scale only axis,
	]
	\begin{axis}[
		betterplot,
		width=.8\linewidth,
		height=.13\textheight,
		xlabel={Time~$t$},
		ylabel={Temperature},
		xmin=0,
		xmax=25,
		ymin=0,
		ymax=1,
		domain=0:25,
		yticklabel=\empty,
		name=temperature axis,
		extra y ticks={.6,.8},
		extra y tick labels={$T_0$, $T_1$},
		]
		\addplot+[black] table[x=time,y=temperature] {data/samples_temperature.dat};

		\addplot[very thick,dashed,black] {.8};
		\addplot[very thick,dashed,black] {.6};

		\addplot[thin,fill=plot1,fill opacity=.15] coordinates {(10, 1.1) (10, -.1) (18, -.1) (18, 1.1)};
		\addplot[thin,fill=plot0,fill opacity=.15] coordinates {(10, 1.1) (10, -.1) (0, -.1) (0, 1.1)};
		\addplot[thin,fill=plot0,fill opacity=.15] coordinates {(25, 1.1) (25, -.1) (18, -.1) (18, 1.1)};

		\draw[|<->|, thick] (axis cs:0,.15) -- (axis cs:10,.15);
		\node[fill=white, font=\small] at (axis cs:5,.15) {Normal State};

		\draw[|<->|, thick] (axis cs:10,.15) -- (axis cs:18,.15);
		\node[fill=white, font=\small] at (axis cs:14,.15) {Alert State};

		\draw[|<->|, thick] (axis cs:18,.15) -- (axis cs:25,.15);
		\node[fill=white, font=\small] at (axis cs:21.5,.15) {Normal State};
	\end{axis}

	\begin{axis}[
		anchor=south west,
		at=(temperature axis.north west),
		width=.8\linewidth,
		height=.02\textheight,
		xtick=\empty,
		ytick=\empty,
		xmin=0,
		xmax=25,
		ymin=0,
		ymax=1,
		]
		\draw[plot3,fill=plot3,fill opacity=.3] foreach \n in {2, 6, 8}{(axis cs:\n,-1) rectangle (axis cs:\n+1,1.1)};
		\draw[plot3,fill=plot3,fill opacity=.3] foreach \n in {10, ..., 17}{(axis cs:\n,-1) rectangle (axis cs:\n+1,1.1)};
		\draw[plot3,fill=plot3,fill opacity=.3] foreach \n in {20, 22, 23}{(axis cs:\n,-1) rectangle (axis cs:\n+1,1.1)};

		\draw[plot2,fill=plot2,fill opacity=.3] foreach \n in {0, 1, 3, 4, 5, 7, 9}{(axis cs:\n,-1) rectangle (axis cs:\n+1,1.1)};
		\draw[plot2,fill=plot2,fill opacity=.3] foreach \n in {18, 19, 21, 24, 25}{(axis cs:\n,-1) rectangle (axis cs:\n+1,1.1)};
	\end{axis}
\end{tikzpicture}
	\caption{Illustration of an application example in which a temperature sensor measures the temperature and transmits its reading as a payload with probability~$\probtx$.
		If a critical temperature threshold~$T_1$ is crossed, the system enters an alert state until the temperature falls below a second threshold~$T_0$ again.
		During the alert state, a temperature reading is transmitted in every time slot.
		During the normal state, a reading is transmitted with probability~$\probtx$ (indicated by the black boxes above the time slots).
		If no message is transmitted, \gls{skg} is performed instead (indicated by the red boxes above the time slots).}
	\label{fig:temperature-example}
\end{figure}

An illustration of this application example is given in \autoref{fig:temperature-example}.
Once the temperature reading crosses the upper threshold, the system switches to the alert state, which causes messages to be transmitted in every time slot (indicated by the black boxes above).
During the normal state of operation, \gls{skg} is performed if no message needs to be transmitted (indicated by the red boxes).

Beyond this example, potential applications of the model and the results presented in this work include other communication systems with a budget, e.g., battery-constrained systems.
\section{Power Control Schemes for Resilience}\label{sec:resilience-analysis}

Since we are considering the normal operation state of the resilience cycle in this work, we focus on the long-term behavior of the system and its preparation for an alert state.
In this section, we present three different power allocation schemes.

\subsection{Constant Power}\label{sub:const-power}
As a first basic scheme, we consider a constant transmission power.
In this case, the transmit power is fixed to a constant value for all \gls{skg} time slots, i.e., we have ${\txpower(t)=\txpower}$, for all time slots~$t$.
In the following, we derive analytical performance bounds and consider the long-term behavior of the system.

A first result are lower bound and upper bound on the resilience outage probability~$\resiloutageprob$.
These bounds are easier to calculate than the direct expression from~\eqref{eq:def-resilience-outage-prob}.

\begin{thm}[{Bounds on the Resilience Outage Probability}]\label{thm:bounds-resilience-outage-prob}
Consider the described communication system with an \gls{sk} budget.
The resilience outage probability is lower bounded by
\begin{equation}\label{eq:lower-bound-resilience-outage-prob}
	1-\min\left\{\cdf_{\netsum(t)}\big(\initbudget-\minbudget\big),\; \probsurv_{\initbudget}(t)\right\} \leq \resiloutageprob(t)
\end{equation}
and upper bounded by
\begin{equation}\label{eq:upper-bound-resilience-outage-prob}
	\resiloutageprob(t) \leq
	1-\positive{\cdf_{\netsum(t)}\big(\initbudget-\minbudget\big) - \probruin_{\initbudget}(t)}\,,
\end{equation}
with the ruin probability~$\probruin_{\initbudget}$ defined in~\eqref{eq:def-prob-ruin}.
\end{thm}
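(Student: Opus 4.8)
The plan is to rewrite the event ``no resilience outage'' as an intersection of two tractable events and then bound the probability of that intersection with the elementary Fréchet--Bonferroni inequalities.

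The crux is the first step, reformulating~\eqref{eq:def-resilience-outage-prob}. Here $\alertoutprob(t)$ must be read as the conditional alert outage probability given the realized budget path, so that it is a function of the random budget and $\resiloutageprob(t)$ is a genuine probability over that randomness. Conditioned on survival and on $\budget(t)=b$, the only remaining randomness in~\eqref{eq:def-alert-outage-prob} is the alert duration, and $\alertoutprob(t\mid b)=\Pr(b\leq\sum_{i=t}^{t+\alertduration}L)$ is non-increasing in~$b$. Using the quantile definition of $\minbudget$ in~\eqref{eq:def-min-budget}, one checks that $\alertoutprob(t\mid b)\leq\alertoutprobmax$ is equivalent to $b\geq\minbudget$, while on the ruin event the outage is certain. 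Hence
\begin{equation*}
	1-\resiloutageprob(t)=\Pr\!\left(\budget(t)\geq\minbudget,\ \min_{0\leq i\leq t}\budget(i)>0\right).
\end{equation*}
With $\budget(t)=\initbudget-\netsum(t)$ from~\eqref{eq:def-budget}, I set $A=\{\netsum(t)\leq\initbudget-\minbudget\}$, so that $\Pr(A)=\cdf_{\netsum(t)}(\initbudget-\minbudget)$, and $S=\{\survtime>t\}$, so that $\Pr(S)=\probsurv_{\initbudget}(t)$ and $\Pr(S^{c})=\probruin_{\initbudget}(t)$. This yields $1-\resiloutageprob(t)=\Pr(A\cap S)$, and this identification is the heart of the argument.

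For the lower bound on $\resiloutageprob$ I apply $\Pr(A\cap S)\leq\min\{\Pr(A),\Pr(S)\}$, which after taking complements gives exactly~\eqref{eq:lower-bound-resilience-outage-prob}. For the upper bound I use the Bonferroni inequality $\Pr(A\cap S)\geq\Pr(A)-\Pr(S^{c})=\cdf_{\netsum(t)}(\initbudget-\minbudget)-\probruin_{\initbudget}(t)$ together with the trivial bound $\Pr(A\cap S)\geq 0$, so that $\Pr(A\cap S)\geq\positive{\cdf_{\netsum(t)}(\initbudget-\minbudget)-\probruin_{\initbudget}(t)}$; taking complements then produces~\eqref{eq:upper-bound-resilience-outage-prob}.

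The main obstacle is entirely in the first step, namely making the equivalence $\{\alertoutprob(t)\leq\alertoutprobmax\}=\{\budget(t)\geq\minbudget\}$ rigorous and correctly incorporating the survival constraint, since the definition in~\eqref{eq:def-alert-outage-prob} entangles the randomness of the budget with that of the alert duration~$\alertduration$. Once the event is expressed as $A\cap S$, both bounds follow in one line each from standard two-event probability inequalities, requiring no explicit manipulation of the distribution of $\netsum(t)$.
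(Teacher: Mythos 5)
Your proof is correct and takes essentially the same route as the paper's: both rewrite the no-outage event as the joint event $\{\netsum(t)\leq\initbudget-\minbudget\}\cap\{\min_{0\leq i\leq t}\budget(i)>0\}$ and apply the two-sided Fr\'{e}chet bounds (your $\Pr(A)-\Pr(S^{c})$ is identical to the paper's $\Pr(A)+\Pr(S)-1$). Your explicit justification of the equivalence $\{\alertoutprob(t)\leq\alertoutprobmax\}=\{\budget(t)\geq\minbudget\}$ via monotonicity and the quantile definition of $\minbudget$ is a step the paper merely asserts, so it is a welcome addition rather than a deviation.
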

\begin{proof}
	The proof can be found in \autoref{app:proof-thm-bounds-resilience-outage-prob}.
\end{proof}

Ideally, the system should operate indefinitely under normal conditions.
Consequently, an important aspect is the long-term behavior of the resilience outage probability.
In particular, we are interested in adjusting the transmit power level such that the resilience outage probability~$\resiloutageprob$ does not exceed a specified threshold in the steady state of the system.
In order to analyze this, we rely on the following result established in~\cite{Besser2024SKGbudget}.
\begin{lem}[{\cite[Cor.~2]{Besser2024SKGbudget}}]\label{lem:expected-net-claim-random}
	Consider the described communication system in the {normal state} where a message is transmitted with probability~$\probtx$.
	The following relation between the expected value of the net usage~$\netusage$ and the transmission probability~$\probtx$ holds:
	\begin{equation}\label{eq:relation-tx-prob-neg-expect-net-claim-random}
		\expect{\netusage}\gtreqqless 0 \quad \Leftrightarrow \quad \probtx \gtreqqless \frac{\expect{\rateskg}}{\expect{\rateskg} + L} = \probtxcrit\,.
	\end{equation}
\end{lem}

An important consequence for the resilience outage probability, following from \autoref{lem:expected-net-claim-random}, is given in the following theorem.
\begin{thm}[{Long-Term Resilience Outage Probability}]\label{thm:long-term-resilience-outage}
	Consider the described communication system in the {normal state} where a message is transmitted with probability~$\probtx$.
	For $\probtx>\probtxcrit$, the system will run out of \gls{sk} bits almost surely, and the long-term resilience outage probability is therefore
	\begin{equation}
		\lim\limits_{t\to\infty} \resiloutageprob(t) = 1\,.
	\end{equation}
	For $\probtx<\probtxcrit$, we have
	\begin{equation}\label{eq:resil-outage-prob-long-term-small-p}
		\lim\limits_{t\to\infty}\resiloutageprob(t) = \lim\limits_{t\to\infty}\probruin_{\initbudget}(t) = \probruin_{\initbudget}(\infty)\,.
	\end{equation}
\end{thm}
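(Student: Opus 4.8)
The plan is to convert the two sign conditions on the transmission probability into almost-sure drift statements for the budget, and then to push these limits through the sandwich supplied by \autoref{thm:bounds-resilience-outage-prob}. Under constant power the net usages $\{\netusage(i)\}_{i\geq 1}$ are independent and identically distributed with finite mean, since the arrivals $\txindicator(i)\sim\berndist(\probtx)$ are i.i.d.\ and the channel gains—hence the rates $\rateskg(i)$—are independent across time slots. The strong law of large numbers then yields $\netsum(t)/t\to\expect{\netusage}$ almost surely, so the sign of $\expect{\netusage}$ determines the direction in which $\netsum(t)$, and therefore $\budget(t)=\initbudget-\netsum(t)$, drifts. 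By \autoref{lem:expected-net-claim-random}, $\probtx>\probtxcrit$ is equivalent to $\expect{\netusage}>0$ and $\probtx<\probtxcrit$ to $\expect{\netusage}<0$.

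For $\probtx>\probtxcrit$ I would argue that $\netsum(t)\to+\infty$ almost surely, hence $\budget(t)\to-\infty$ almost surely, so ruin is certain and $\probsurv_{\initbudget}(t)\to 0$. Substituting this into the lower bound \eqref{eq:lower-bound-resilience-outage-prob} makes the inner minimum vanish (both $\cdf_{\netsum(t)}(\initbudget-\minbudget)$ and $\probsurv_{\initbudget}(t)$ tend to $0$), giving $\liminf_{t\to\infty}\resiloutageprob(t)\geq 1$; combined with the trivial upper bound $\resiloutageprob(t)\leq 1$ this forces $\lim_{t\to\infty}\resiloutageprob(t)=1$.

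For $\probtx<\probtxcrit$ I would use that $\netsum(t)\to-\infty$ almost surely, so $\cdf_{\netsum(t)}(\initbudget-\minbudget)=\Pr(\netsum(t)\leq\initbudget-\minbudget)\to 1$. Passing to the limit in \eqref{eq:lower-bound-resilience-outage-prob}, where $\probsurv_{\initbudget}(t)\to 1-\probruin_{\initbudget}(\infty)$ attains the minimum, gives $\liminf_{t\to\infty}\resiloutageprob(t)\geq\probruin_{\initbudget}(\infty)$; passing to the limit in \eqref{eq:upper-bound-resilience-outage-prob}, where $\probruin_{\initbudget}(t)\to\probruin_{\initbudget}(\infty)$ and the positive part equals $1-\probruin_{\initbudget}(\infty)\geq 0$, gives $\limsup_{t\to\infty}\resiloutageprob(t)\leq\probruin_{\initbudget}(\infty)$. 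The two one-sided bounds squeeze the limit to $\probruin_{\initbudget}(\infty)$, which coincides with $\lim_{t\to\infty}\probruin_{\initbudget}(t)$ by the definition of ultimate ruin, establishing \eqref{eq:resil-outage-prob-long-term-small-p}.

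The main obstacle is not the limiting arithmetic but cleanly justifying the hypotheses: confirming that $\expect{\rateskg}$, and hence $\expect{\netusage}$, is finite so that the SLLN applies, and checking that the convergences $\cdf_{\netsum(t)}(\initbudget-\minbudget)\to 1$ and $\probsurv_{\initbudget}(t)\to 0$ interact correctly with the $\min\{\cdot\}$ and $\positive{\cdot}$ operations when the limit is taken inside the bounds. The boundary case $\probtx=\probtxcrit$ of zero drift is excluded from the statement and would need a separate recurrence argument.
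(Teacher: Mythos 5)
Your proposal is correct and follows essentially the same route as the paper: establish the almost-sure drift of the random walk $\netsum(t)$ from the sign of $\expect{\netusage}$ (the paper cites \cite[Cor.~2]{Besser2024SKGbudget} and Feller where you invoke the SLLN, but the content is the same), then squeeze $\resiloutageprob(t)$ between the limits of the bounds from \autoref{thm:bounds-resilience-outage-prob}. The limiting evaluations of the $\min\{\cdot\}$ and $\positive{\cdot}$ expressions in both regimes match the paper's argument exactly.
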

\begin{proof}
	The proof can be found in \autoref{app:proof-cor-long-term-resilience-outage}.
\end{proof}

Combining the above results allows us to answer the question from our problem formulation regarding the trade-off between transmit power and resilience for a constant power allocation.
In particular, we can now determine the minimum (constant) transmit power such that the resilience requirements are still fulfilled.
Given the system parameters and resilience requirements, this corresponds to the smallest power~$\txpower$ such that the probability of ultimate ruin~$\probruin_{\initbudget}(\infty)$ is equal to the accepted resilience outage probability for the application.
This will be illustrated in the following example.

\begin{example}[Constant Power -- Rayleigh Fading]\label{ex:constant-power}
In the following, we illustrate the general results from above with a numerical example.
In particular, we assume that the channels between Alice and Bob, and Alice and Eve follow Rayleigh fading with average channel \glspl{snr}~$\expect{\channelbob}=\SI{10}{\dB}$ and $\expect{\channeleve}=\SI{0}{\dB}$.
The other system parameters are set to $\initbudget=\SI{70}{\bit}$, $\probtx=0.35$, $\messagelength=\SI{5}{\bit}$ and $\alertoutprobmax=10^{-1}$.
The duration of an alert state is assumed to be Poisson-distributed with an average length of \num{5}~time slots, i.e., $\alertduration\sim\poisdist(5)$.
According to~\eqref{eq:def-min-budget}, this implies that the system needs an \gls{sk} budget of at least $\minbudget=\SI{40}{\bit}$ to fulfill the target alert outage probability.

The behavior of the resulting resilience outage probability~$\resiloutageprob(t)$, ruin probability~$\probruin_{\initbudget}(t)$, and probability of ultimate ruin~$\probruin_{\initbudget}(\infty)$ over time are shown in \autoref{fig:outage-prob-vs-time}.
The transmit power is set to $\txpower=\SI{10}{\dB}$.
The resilience outage probability is determined by \gls{mc} simulations with $10^{6}$~samples.
The ruin probability in finite time~$\probruin_{\initbudget}(t)$ is determined according to the integrodifference equation~\cite{Lutscher2019} given in~\cite[{Eq.~(12)}]{Besser2024SKGbudget}, and the probability of ultimate ruin is calculated according to the integral equation from~\cite[{Eq.~(24)}]{Besser2024SKGbudget}.
The code to reproduce all of the shown results can be found in~\cite{BesserGithub}.

\begin{figure}
	\centering
	\begin{tikzpicture}
	\begin{axis}[
		betterplot,
		xlabel={Time~$t$},
		ylabel={Outage Probabilities},
		xmin=0,
		xmax=250,
		ymode=log,
		ymin=1e-4,
		ymax=1,
		legend pos=south east,
		legend style={%
			font=\normalsize,
		},
		]
		\pgfplotstableread[x=time]{data/results-p0.35-P10.0-b70.0-e1.000000E-01.dat}\tbl

		\addplot+[mark repeat=20] table[y={outageMC}] {\tbl};
		\addlegendentry{Resil. Outage Prob.~$\resiloutageprob(t)$}
		\addplot+[mark repeat=20] table[y={ruinIDE}] {\tbl};
		\addlegendentry{Ruin Prob.~$\probruin_{\initbudget}(t)$}
		\addplot+[mark repeat=20] table[y={ultimateRuin}] {\tbl};
		\addlegendentry{Prob. of Ult. Ruin~$\probruin_{\initbudget}(\infty)$}
	\end{axis}
\end{tikzpicture}
	\caption{Behavior of different outage probabilities over time for a system with parameters $\expect{\channelbob}=\SI{10}{\dB}$, $\expect{\channeleve}=\SI{0}{\dB}$, $\txpower=\SI{10}{\dB}$, $\probtx=0.35$, $\messagelength=\SI{5}{\bit}$, $\initbudget=\SI{70}{\bit}$, and $\minbudget=\SI{40}{\bit}$.
	(\autoref{ex:constant-power})}%
	\label{fig:outage-prob-vs-time}%
\end{figure}%

For the chosen system parameters with $\txpower=\SI{10}{\dB}$, the critical probability~$\probtxcrit$ at which the average usage of \gls{sk} bits becomes negative is calculated according to~\eqref{eq:relation-tx-prob-neg-expect-net-claim-random} to ${\probtxcrit=0.398}$.
The average usage of key bits is $\expect{\netusage}=\SI{-0.4}{\bit}$.
Thus, the resilience outage probability~$\resiloutageprob(t)$ will approach the probability of ultimate ruin~$\probruin_{\initbudget}(\infty)$ over time according to \autoref{thm:long-term-resilience-outage}.
This is calculated to $\probruin_{\initbudget}(\infty)=0.044$ for the above values.
First, it can be seen from \autoref{fig:outage-prob-vs-time} that the ruin probability~$\probruin_{\initbudget}(t)$ increases over time and approaches $\probruin_{\initbudget}(\infty)$.
Similarly, the resilience outage probability~$\resiloutageprob(t)$ converges also to $\probruin_{\initbudget}(\infty)$, which is consistent with the result from \autoref{thm:long-term-resilience-outage}.
The operational meaning of this result is that in about $\SI{4.4}{\percent}$ of the cases, the system has less \gls{sk} bits available than it would need to meet the resilience requirement~$\alertoutprobmax$.

Next, we show the influence of the transmit power~$\txpower$ on the resilience outage probability in \autoref{fig:outage-prob-vs-power}.
The resilience outage probabilities~$\resiloutageprob(t)$ and finite-times ruin probabilities~$\probruin_{\initbudget}(t)$ are shown for time slot $t=200$.
It can be seen that the gap between them is small for all transmit powers.
In contrast, the gap between $\resiloutageprob(t)$ and the probability of ultimate ruin~$\probruin_{\initbudget}(\infty)$ is larger for small $\txpower$.
This indicates that the convergence is slower for small transmit powers.
Additionally, it can be seen that $\probruin_{\initbudget}(\infty)=1$ for small $\txpower$.
This can be found directly by calculating the critical transmission probabilities~$\probtxcrit$, which are below the system's $\probtx=0.35$ for small $\txpower$, e.g., for $\txpower=\SI{2}{\dB}$ we have $\probtxcrit=0.338$.
Inversely, based on \eqref{eq:relation-tx-prob-neg-expect-net-claim-random}, we can calculate the maximum transmit power up to which the probability of ultimate ruin~$\probruin_{\initbudget}(\infty)$ is one.
For the given parameters, this is around $\txpower=\SI{3.12}{\dB}$.

A system designer could now also directly determine the minimum transmit power to fulfill the resilience requirements of the system.
Assuming that the application tolerates a maximum resilience outage probability of $\resiloutageprobmax=10^{-1}$, the minimum transmit power would be around $\txpower=\SI{7.5}{\dB}$, cf.~\autoref{fig:outage-prob-vs-power}.

\begin{figure}
	\centering
	\begin{tikzpicture}
	\begin{axis}[
		betterplot,
		xlabel={Transmit Power~$\txpower$ [$\si{\dB}$]},
		ylabel={Outage Probabilities},
		xmin=0,
		xmax=20,
		ymode=log,
		ymin=1e-3,
		ymax=1,
		legend pos=south west,
		extra x ticks={7.5},
		legend style={%
			font=\normalsize,
		},
		]
		\pgfplotstableread{data/combined_results-t200.000-p0.350-b70.000-e0.100-mod.dat}\tbb

		\addplot+ table[x={power_tx_db},y={outageMC}] {\tbb};
		\addlegendentry{Resil. Outage Prob.~$\resiloutageprob({200})$}
		\addplot+ table[x={power_tx_db},y={ruinIDE}] {\tbb};
		\addlegendentry{Ruin Prob.~$\probruin_{\initbudget}({200})$}
		\addplot+ table[x={power_tx_db},y={ultimateRuin}] {\tbb};
		\addlegendentry{Prob. of Ult. Ruin~$\probruin_{\initbudget}(\infty)$}

		\addplot[black, dashed, very thick] coordinates {(0,0.1) (7.5,.1) (7.5,1e-4)};
	\end{axis}
\end{tikzpicture}
	\caption{Influence of the transmit power~$\txpower$ on different outage probabilities for a system with parameters $\expect{\channelbob}=\SI{10}{\dB}$, $\expect{\channeleve}=\SI{0}{\dB}$, $\probtx=0.35$, $\messagelength=\SI{5}{\bit}$, $\initbudget=\SI{70}{\bit}$, and $\minbudget=\SI{40}{\bit}$.
	For a maximum tolerated resilience outage probability of~${\resiloutageprobmax=10^{-1}}$, the transmit power could be lowered to around~$\txpower=\SI{7.5}{\dB}$.
	(\autoref{ex:constant-power})
	}%
	\label{fig:outage-prob-vs-power}
\end{figure}
\end{example}

\subsection{Adaptive Power Control}\label{sub:adaptive}
Since we assume \gls{csi} at the communication parties, this information could be leveraged to improve the performance, i.e., reduce the transmit power whenever possible.
In the following, we propose an adaptive power allocation scheme based on an adapted \gls{ee} notion, which uses the conditional expected key generation rate.
In particular, we can assume that the channel gain between Alice and Bob~$\channelbob(t)$ is known accurately.
Additionally, we assume statistical \gls{csi} of the eavesdropper's channel, i.e., the distribution of~$\channeleve$ is known at the transmitter.
With this information, we design the following adaptive power allocation strategy.

\begin{prop}[{Adaptive Power Control}]\label{prop:adaptive-power-control}
	Consider the communication system with a \gls{sk} budget as described above.
	At time~$t$, the transmit power~$\txpower(t)$ is given as
	\begin{equation}\label{eq:adaptive-power-control}
		\txpower(t) =
		\begin{cases}
			\txpowermax & \text{if}\ \budget(t)-\minbudget \leq 0\\
			\argmax_{P} \adaptee(P) & \text{otherwise}
		\end{cases}
	\end{equation}
	with the adapted \gls{ee}
	\begin{equation}\label{eq:def-adapt-ee}
		\adaptee(P) = \frac{\expect{\rateskg(P) \middle| \channelbob(t)}}{P^{\weight(\budget(t)-\minbudget)}}\,,
	\end{equation}
	where $\weight > 0$ is a hyperparameter that allows tuning the importance of reducing the transmit power, with larger~$\weight$ enforcing lower transmit powers.
\end{prop}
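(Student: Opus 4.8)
The substantive content to establish is that the $\argmax$ in the second branch of~\eqref{eq:adaptive-power-control} is well defined and that, as claimed, a larger weight~$\weight$ yields a smaller optimal transmit power. The plan is to fix the time slot~$t$, discard the trivial branch, and work on the region $\delta:=\budget(t)-\minbudget>0$. Abbreviating the conditional expected key rate by $\avg{R}(P):=\expect{\rateskg(P)\middle|\channelbob(t)}$, the objective is $\adaptee(P)=\avg{R}(P)/P^{\weight\delta}$; since $\adaptee>0$, maximizing it is equivalent to maximizing $\ln\adaptee(P)=\ln\avg{R}(P)-\weight\delta\ln P$.

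First I would record the structural properties of $\avg{R}$. By~\eqref{eq:skg-rate}, for every fixed channel realization the instantaneous rate $\rateskg(P)=\log_2\!\big(1+\tfrac{P\channelbob}{1+P\channeleve}\big)$ is strictly increasing and strictly concave in~$P$, satisfies $\rateskg(0)=0$, and tends to the finite limit $\log_2(1+\channelbob/\channeleve)$ as $P\to\infty$; taking the conditional expectation over~$\channeleve$ preserves monotonicity, concavity, the value at the origin, and boundedness, so $\avg{R}$ inherits all of them. Next I would write the first-order condition $\frac{\avg{R}'(P)}{\avg{R}(P)}=\frac{\weight\delta}{P}$, i.e. $e(P^{*})=\weight\delta$ with the elasticity $e(P):=P\,\avg{R}'(P)/\avg{R}(P)$. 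The linear growth of $\avg{R}$ at the origin gives $e(0^{+})=1$ and boundedness gives $e(\infty)=0$, so an interior maximizer exists exactly when $\weight\delta\in(0,1)$, whereas for $\weight\delta\ge 1$ the objective is strictly decreasing and the power is pushed to its lowest admissible value---both outcomes already agreeing with the stated role of~$\weight$.

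The hard part, and the step on which uniqueness and the monotonicity claim both rest, is to show that $e$ is strictly decreasing. Granting this, the critical point is unique and the implicit function theorem gives $\frac{dP^{*}}{d(\weight\delta)}=1/e'(P^{*})<0$, so $\partial P^{*}/\partial\weight=\delta\,\frac{dP^{*}}{d(\weight\delta)}<0$, which is precisely the asserted behavior. For a single channel realization the monotonicity is elementary: viewing $e$ as the log--log derivative $\mathrm{d}\ln\avg{R}/\mathrm{d}\ln P$, it amounts to log-concavity of $\rateskg$ in $\ln P$, which for the rate above reduces to $\ln(1+u)<u$. The obstacle is that elasticity is a nonlinear functional of the rate, so this per-realization argument does not transfer termwise through the expectation over~$\channeleve$; I would instead show that $h(P):=P\,\avg{R}'(P)-\weight\delta\,\avg{R}(P)$ changes sign only once (from $+$ to $-$), either via a total-positivity (variation-diminishing) argument for the mixture over the eavesdropper channel, or, for the Rayleigh model used in the numerical section, by verifying log-concavity of $\avg{R}(e^{s})$ in $s=\ln P$ directly from Pr\'ekopa's theorem using the log-concave density of~$\channeleve$.
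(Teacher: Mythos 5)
First, a framing point: the paper does not actually prove \autoref{prop:adaptive-power-control}. The proposition is a \emph{construction} --- it defines the power allocation rule --- and the text that follows it is only a qualitative design rationale (use $\txpowermax$ when in outage to maximize the \gls{skg} rate; otherwise trade rate against power via the exponent $\weight(\budget(t)-\minbudget)$). The claim that larger $\weight$ enforces lower power is stated as a heuristic description of the hyperparameter, not as a theorem with a proof. Your proposal therefore sets out to prove strictly more than the paper does: well-posedness of the $\argmax$ in \eqref{eq:adaptive-power-control} and the monotonicity of the maximizer in $\weight$. That is a legitimate and worthwhile program, and your reduction to the elasticity equation $e(P^{*})=\weight\delta$ with $e(P)=P\,\avg{R}'(P)/\avg{R}(P)$, together with the boundary values $e(0^{+})=1$ and $e(\infty)=0$ (which follow correctly from linear growth at the origin and boundedness of the conditional expected rate), is sound as far as it goes.

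However, your argument has a genuine gap exactly where you flag it: strict monotonicity of $e$, on which both uniqueness of $P^{*}$ and the sign of $dP^{*}/d\weight$ rest, is never established. You correctly observe that the per-realization computation does not pass through the expectation over $\channeleve$ because the elasticity is a nonlinear functional of the rate, but the two repair strategies you name are only gestured at. In particular, Pr\'{e}kopa's theorem would require \emph{joint} log-concavity of the integrand $\log_2\bigl(1+e^{s}\channelbob/(1+e^{s}y)\bigr)f_{\channeleve}(y)$ in $(s,y)$, which is not verified and is not obvious --- log-concavity is not preserved under the sums/integrals implicit here, so this step cannot be waved through. Until one of these routes is carried out, the proof establishes existence of a critical point for $\weight\delta\in(0,1)$ but not uniqueness, and the monotonicity claim via the implicit function theorem is unsupported. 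A secondary issue worth making explicit: for $\weight\delta\geq 1$ the objective $\adaptee(P)\sim cP^{1-\weight\delta}$ diverges as $P\to 0^{+}$, so $\argmax_{P}\adaptee(P)$ does not exist on $(0,\txpowermax]$ and the rule \eqref{eq:adaptive-power-control} is ill-defined there unless a positive minimum power is imposed; the paper's numerical regime ($\weight=0.002$, surpluses of order $10$--$10^{2}$) avoids this, but the proposition as stated does not.
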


The primary goal is to find a power control strategy that improves the resilience in terms of the resilience outage probability~$\resiloutageprob$, i.e., the budget of the system~$\budget(t)$ should stay above the required minimum~$\minbudget$ as much as possible.
The adaptive power control in \autoref{prop:adaptive-power-control} combines this target with the second objective of minimizing the transmit power using the following ideas:
\begin{enumerate}
	\item When the system is currently in a resilience outage, i.e., $\budget(t) \leq \minbudget$, the system needs to recover as quickly as possible.
	This is done by allocating the maximum allowed power~$\txpowermax$ as it maximizes the \gls{skg} rate.
	\item However, if the system is not in outage, we have the chance to save power.
	In particular, we maximize the \gls{ee}, which is generally given as the ratio of achievable data rate and consumed power.
	In our case, the data rate is the expected \gls{skg} rate given the channel gain of the main channel~$\channelbob(t)$.
	Additionally, we add an exponent~${\weight(\budget(t)-\minbudget)}$, with parameter~$\weight > 0$, to the transmit power that allows a flexible tuning of the importance of the two conflicting objectives (maximum rate vs.\ minimal power).
	A larger value of~${\weight(\budget(t)-\minbudget)}$ increases the importance of saving energy and a lower transmit power will be chosen according to~\eqref{eq:adaptive-power-control}.
	Additionally, incorporating the difference of the current budget~$\budget(t)$ and the target minimum budget~$\minbudget$ into the exponent, automatically increases the importance of achieving a high \gls{skg} rate over saving energy when the system gets close to a resilience outage.
\end{enumerate}

\begin{example}[{Adaptive Power Control -- Rayleigh Fading}]\label{ex:adaptive-rayleigh}
In order to illustrate the adaptive power control scheme, we again use Rayleigh fading as a numerical example.
We use the system parameters from \autoref{ex:constant-power}, i.e., $\expect{\channelbob}=\SI{10}{\dB}$ and $\expect{\channeleve} = 1/\ly = \SI{0}{\dB}$.
Using the expression of the \gls{skg} rate from~\eqref{eq:skg-rate} and the exponential distribution of the channel gains, the conditional expectation is calculated as
\begin{multline*}
	\expect{\rateskg(P) \middle| \channelbob=h} = \\
	\int_{0}^{\infty} \log_2\left(\frac{1 + P h + P y}{1 + P y}\right) \ly \exp(-\ly y) \diff{y}\,,
\end{multline*}
which is evaluated to~\eqref{eq:cond-expect-skg-rayleigh} {at the bottom of this page}
with $\Ei$ being the exponential integral~\cite[Sec.~5.1]{Abramowitz1972}.

\capstartfalse
\begin{table*}[b]
	\hrulefill
	\normalsize
	\begin{equation}\label{eq:cond-expect-skg-rayleigh}
		\expect{\rateskg(P) \middle| \channelbob=h} = \frac{1}{\log 2} \Bigg(\exp\left(\frac{\ly}{P}\right) \Ei\left(-\frac{\ly}{P}\right)
		- \exp\left(\ly \left(h + \frac{1}{P}\right)\right) \Ei\left(-\frac{\ly + h \ly P}{P}\right) + \log(1 + h P)\Bigg)
	\end{equation}
\end{table*}
\capstarttrue

A visualization of the adapted \gls{ee}~$\adaptee$ from~\eqref{eq:def-adapt-ee} is shown for different values of $\channelbob$ and $\budget(t)-\minbudget$ in \autoref{fig:adaptive-target-function-rayleigh}.
The two primary ideas of the adaptive power control scheme can both be seen.
First, for small budget surpluses, e.g., $\budget-\minbudget=10$, the maximum of~$\adaptee$ is attained at higher transmit powers.
This ensures a higher \gls{skg} rate and reduces the likelihood of a resilience outage.
Correspondingly, at a high budget surplus, e.g., $\budget-\minbudget=50$, energy can be conserved and the transmit power~${\argmax g}$ according to~\eqref{eq:adaptive-power-control} is lower.
The second shown effect is the exploitation of the available \gls{csi} of the main channel.
When the channel to Bob is good, i.e., $\channelbob(t)$ is high, the transmit power can be reduced while still achieving a high \gls{skg} rate.
In contrast, when $\channelbob(t)$ is small, more power needs to be used.

\begin{figure}[t]%
	\centering
	\begin{tikzpicture}%
	\begin{axis}[
		betterplot,
		xlabel={Transmit Power~$P$},
		ylabel={Adapted Energy Efficiency~$\adaptee(P)$},
		xmin=1,
		xmax=1000,
		xmode=log,
		ymin=0,
		ymax=5,
		legend style={%
			font=\small,
			anchor=east,
			at={(.98, .53)},
		}
		]
		
		\pgfplotstableread[x=power,col sep=tab]{data/adaptive-target-function-ly1.0.dat}\tbl
		
		\addplot+[mark repeat=10] table[x=power,y=hb1b10] {\tbl};
		\addlegendentry{$\channelbob=1$, $\budget-\minbudget=10$}
		
		\addplot+[mark repeat=10] table[x=power,y=hb1b50] {\tbl};
		\addlegendentry{$\channelbob=1$, $\budget-\minbudget=50$}
		
		\addplot+[mark repeat=10] table[x=power,y=hb20b10] {\tbl};
		\addlegendentry{$\channelbob=20$, $\budget-\minbudget=10$}
		
		\addplot+[mark repeat=10] table[x=power,y=hb20b50] {\tbl};
		\addlegendentry{$\channelbob=20$, $\budget-\minbudget=50$}
	\end{axis}
\end{tikzpicture}
	\vspace*{-.5em}
	\caption{Adapted \gls{ee}~$\adaptee$ from~\eqref{eq:def-adapt-ee} for Rayleigh fading with parameters $\ly=1$ and $\weight=0.002$. (\autoref{ex:adaptive-rayleigh})}%
	\label{fig:adaptive-target-function-rayleigh}
\end{figure}

\end{example}

\subsection{Reinforcement Learning}\label{sub:reinforcement}
Due to the multi-objective and dynamic nature of the optimization problem, it is difficult to find an optimal power allocation strategy.
While the adaptive algorithm from \autoref{sub:adaptive} takes the current state of the system into account when calculating the power level, it does not incorporate the history and dynamic nature of the system.
A good decision strategy for such dynamic time series optimization problems can be found by \gls{ml} algorithms, in particular \gls{rl}~\cite{Sutton2020}.

The basic setup of \gls{rl} is an agent in a dynamic environment.
The agent observes the current state of the environment and takes an action based on it.
This action changes the state of the system and the agent receives a reward depending on how useful the action was.
The goal is to train the agent's action policy such that the cumulative reward is maximized.

In this work, we use \gls{rl} to find a power allocation strategy for improving the resilience of communication systems with a \gls{sk} budget.
However, it should be noted that the primary goal in this work is \emph{not} to optimize the \gls{rl}-based solution but to provide one way of leveraging \gls{ml} in the context of \gls{sk} budgets.
The investigation of other reward functions and network structures is outside the scope of this work.

\subsubsection{Action and Observation Space}
The first important aspect of defining the \gls{rl} system is specifying the observation and action space of the agent.
Since the task is to find a transmission power, the action is a continuous number in the interval~${[0, \txpowermax]}$.
The observation space corresponds to the information that the agent can acquire before making its decision.
In our scenario, this includes the current budget~$\budget(t)$ and \gls{csi} about the main channel~$\channelbob(t)$.
Additionally, the system knows whether a message is to be transmitted, i.e., it has access to~$\txindicator(t)$.

\subsubsection{Reward Function}
The next central aspect of a successful application of \gls{rl} for the power allocation problem is a good formulation of the reward function.
In the considered problem, multiple conflicting objectives (maximal resilience vs. minimal power consumption) need to be balanced.
For this, we propose the following reward function~$\reward$ for time slots~$t$ in which \gls{skg} is performed, i.e., $\txindicator(t)=0$,
\begin{equation}\label{eq:reward-function}
	\reward(t) = \weight_1\frac{\avg{\txpower}(t)}{\txpowermax} + \weight_2 \log_{10}\resiloutageprob + \weight_3 \log_{10}\alertoutprob(t) + \weight_4 \! \negative{1-\frac{\minbudget}{\budget(t)}}
\end{equation}
with hyperparameters~${\weight_1 < 0}$, ${\weight_2 < 0}$, ${\weight_3 < 0}$ and ${\weight_4 > 0}$, and the average transmit power~$\avg{\txpower}(t)$ until time~$t$
\begin{equation*}
	\avg{\txpower}(t) = \frac{1}{t} \sum_{i=1}^{t} \txpower(i)\,.
\end{equation*}
The reward function~$\reward$ is the weighted sum of four individual parts.
The first part~$\avg{\txpower}/\txpowermax$ is the average transmit power normalized by the maximum allowed.
Since the corresponding weight~$\weight_1$ is negative, a high average power adds a penalty, incentivizing the agent to reduce the average power.
The value is normalized by the maximum to limit the range to ${[0, 1]}$ and, thus, make it more compatible with the other parts of the reward function.
While this covers the objective to reduce the consumed power, we use the remaining parts of~$\reward$ to model the resilience objective of the system.
The second and third part with corresponding weights~$\weight_2$ and $\weight_3$, respectively, add a reward for a low resilience outage probability~$\resiloutageprob$ and alert outage probability~$\alertoutprob$, respectively.
Decreasing outage probabilities are transformed into more negative values that are converted into higher rewards in combination with the negative weights~$\weight_2$ and $\weight_3$.
This adds the incentive for the agent to minimize both the instantaneous alert outage probability~$\alertoutprob(t)$ and the long-term resilience outage probability~$\resiloutageprob$.
Additionally, we have the fourth part which adds a large penalty whenever the current budget~$\budget(t)$ falls below the minimum required budget~$\minbudget$, i.e., whenever we have a resilience outage in the current time slot.
If the budget is sufficiently large, i.e., ${\budget(t) > \minbudget}$, this penalty is zero.
Since the full reward~$\reward$ is a weighted sum, changing the weights~$\weight_i$ allows a flexible tuning of the importance of the individual aspects.

In time slots in which a message is transmitted, i.e., ${\txindicator=1}$, the reward is set to a small positive constant value, e.g., ${\reward=1}$.
First, this highlights that the reward is independent of the action in these time slots, and additionally it encourages the agent to stay alive longer.

\subsubsection{Training Algorithm}
For training the proposed \gls{rl} agent, many different learning algorithms exist, e.g., \gls{sac}~\cite{Haarnoja2018sac} and \gls{ppo}~\cite{Schulman2017}.
Based on empirical data, we have achieved the best performance using \gls{ppo}.
However, the choice of the training algorithm is part of the hyperparameter optimization and could be considered in detail in future work.

In \autoref{sec:numerical-example}, we will show a numerical example with specific values for the weights~$\weight_i$ and a performance evaluation of the proposed \gls{rl} solution.

\subsection{Comparison of the Proposed Schemes}
The three different schemes proposed above have different advantages and disadvantages.
In the following, we will review and compare them in terms of complexity and adaptability.

As a first criterion, we compare the complexities when setting up the system and during operation.
If a constant power allocation is used, the complexity when setting up the system is minimal.
The system designer only needs to calculate the required power level according to the desired level of resilience as shown in \autoref{sub:const-power}.
The adaptive algorithm presented in \autoref{sub:adaptive} requires more preparation effort as we need an expression or efficient computation of the conditional expectation inside the function~$\adaptee$ from~\eqref{eq:def-adapt-ee} and its maximum in~\eqref{eq:adaptive-power-control}.
The highest complexity when preparing the system for operation has the \gls{rl}-based approach since it requires a full training phase of the underlying \gls{nn} agent.
However, once the training is complete, the complexity during operation is relatively low as it only requires a forward-pass through the trained \gls{nn}.
In contrast, the adaptive algorithm also requires some complex computations in every time slot during operation when calculating the adapted \gls{ee}~$\adaptee$ and its maximum.
The constant power allocation scheme is the simplest during operation as it requires no computation at all.

The next criterion for comparison is dynamic adaptability.
In particular, we refer to the ability of a power allocation scheme to adjust to the normal fluctuations of performance during the normal operation state.
As discussed in \autoref{sub:adaptive}, this adaptability is built into the adaptive algorithm, since its power allocation decision in each time slot depends on the current budget safety margin~${\budget(t)-\minbudget}$ and channel state~$\channelbob(t)$.
However, the influence of the current state of the system is fixed in~\eqref{eq:adaptive-power-control} and its strength is only determined by the single parameter~$\weight$.
A more flexible way of adapting the power based on the current (and past) states of the system can be achieved with an \gls{ml} solution.
Depending on the reward function and underlying structure, the \gls{rl}-based power allocation can model a complex behavior that can take current and past system states into account.
This comes at the cost of a high training effort and less explainability compared to the adaptive algorithm.
In contrast to the two aforementioned schemes, the constant power allocation does not provide any adaptability as it does not take any part of the system's current or past states into account.

A summary of the comparison of the different algorithms is given in \autoref{tab:comparison-schemes}.
An implementation of the algorithms to reproduce the results in this work is made publicly available at~\cite{BesserGithub}.

\begin{table}%
\renewcommand*{\arraystretch}{1.25}
\centering
\caption{Comparison of the Proposed Power Allocation Schemes}%
\label{tab:comparison-schemes}
\begin{tabularx}{.98\linewidth}{l|XXl}
	\toprule
	\diagbox[width=.35\linewidth]{Property}{Scheme} & Constant & Adaptive & \gls{rl}\\
	\midrule
	Preparation Complexity & Low & Medium & High\\
	Operation Complexity & Very Low & Medium & Low--Medium\\
	Adaptability & Very Low & Medium & High\\
	\bottomrule
\end{tabularx}
\end{table}
\section{Numerical Example}\label{sec:numerical-example}

In order to compare the performance of the proposed algorithms, we evaluate them using a numerical example in this section.
In particular, we assume the same system parameters as in \autoref{ex:constant-power} and \autoref{ex:adaptive-rayleigh}, i.e., Rayleigh fading with average channel \glspl{snr}~$\expect{\channelbob}=\SI{10}{\dB}$ and $\expect{\channeleve}=\SI{0}{\dB}$.
The other system parameters are set to $\initbudget=\SI{70}{\bit}$, $\probtx=0.35$, $\messagelength=\SI{5}{\bit}$ and $\alertoutprobmax=10^{-1}$.
The duration of an alert state is assumed to be Poisson-distributed with an average length of five time slots, i.e., $\alertduration\sim\poisdist(5)$.
The maximum allowed power is~$\txpowermax=\SI{30}{\dB}$.

In the following, we compare the following power allocation schemes:
\begin{itemize}
	\item \texttt{Model}:
	This represents the \gls{rl}-based scheme as explained in \autoref{sub:reinforcement}.
	The hyperparameters of the reward function are set to $\weight_1=-50$, $\weight_2=-10$, $\weight_3=-1$, $\weight_4=10$.
	\item \texttt{Max.\ Power}:
	As a baseline comparison, we use the scheme where the maximum available power~$\txpowermax$ is used in each time slot.
	This scheme achieves the best possible resilience performance, making it a lower bound for the resilience outage probability. However, it is important to note that this performance comes at the expense of low energy efficiency.
	\item \texttt{Adaptive}:
	For this, we use the adaptive algorithm introduced in \autoref{sub:adaptive} with parameter~$\weight=0.002$.
	\item \texttt{$\txpower=\SI{10}{\dB}$}:
	Constant power allocation with a transmit power of $\txpower=\SI{10}{\dB}$.
	\item \texttt{Const.\ Budget}:
	Constant power allocation with transmit power~$\txpower$ such that the equation~${\expect{\rateskg}=\messagelength\probtx/(1-\probtx)}$ holds.
	Based on~\eqref{eq:relation-tx-prob-neg-expect-net-claim-random}, this ensures an average net usage of zero key bits in each time slot.
	For the selected system parameters, this is evaluated to ${\txpower=\SI{3.12}{\dB}}$. %
\end{itemize}

We evaluate the performance of the above algorithms using \gls{mc} simulations, averaging \num{2000} runs.
The source code to reproduce the results can be found at~\cite{BesserGithub}.

\begin{figure}
	\centering
	\begin{tikzpicture}
	\begin{axis}[
		betterplot,
		xlabel={Time~$t$},
		ylabel={Transmit Power~$\txpower(t)$ [$\si{\dB}$]},
		xmin=0,
		xmax=2000,
		y filter/.code={\pgfmathparse{10*log10(\pgfmathresult)}},
		legend pos=north east,
		]
		\pgfplotstableread[x=time,col sep=tab]{data/2024-03-13-01-25-PPO-2/power-p0.35-B70-T2000.dat}\tbl

		\addplot+[mark repeat=80] table[x=time,y={model}] {\tbl};
		\addlegendentry{Model}

		\addplot+[mark repeat=80] table[x=time,y={fullPower}] {\tbl};
		\addlegendentry{Max. Power}

		\addplot+[mark repeat=80] table[x=time,y={adaptiveCondExpect}] {\tbl};
		\addlegendentry{Adaptive}

		\addplot+[mark repeat=80] table[x=time,y={constant10dB}] {\tbl};
		\addlegendentry{$\txpower=\SI{10}{\dB}$}

		\addplot+[mark repeat=80] table[x=time,y={constant}] {\tbl};
		\addlegendentry{Const. Budget}

	\end{axis}
\end{tikzpicture}
	\vspace{-.5em}
	\caption{%
		Transmit power~$\txpower$ over time for a communication system with the parameters described in \autoref{sec:numerical-example}.
		The maximum allowed power is~${\txpowermax=\SI{30}{\dB}}$.
		\texttt{Model} refers to the trained \gls{rl} model, \texttt{Adaptive} is the algorithm introduced in \autoref{sub:adaptive}, and the remaining curves correspond to constant powers at different levels.
		The shown values are averaged over \num{2000}~\gls{mc} runs.
	}%
	\label{fig:results-avg-power}
\end{figure}

First, we show the average transmit power~$\txpower(t)$ over time~$t$ in \autoref{fig:results-avg-power}.
For the schemes~\texttt{Max.\ Power}, \texttt{$\txpower=\SI{10}{\dB}$}, and \texttt{Const.\ Budget}, the power is constant over time.
In contrast, the two dynamic schemes~\texttt{Model} and \texttt{Adaptive}, show a varying power level over time.
Although they are based on different ideas (learned vs. analytical approach), the allocated power shows a similar behavior.
In both cases, the systems starts by allocating a larger transmit power to increase the \gls{sk} budget.
Over time, the used power reduces and stabilizes at a constant value.
For the trained \gls{rl}-model, this value is around ${\txpower=\SI{11.9}{\dB}}$, whereas the adaptive algorithm stabilizes at around ${\txpower=\SI{9}{\dB}}$.
However, it should be noted that this value can be adjusted by changing the parameter~$\weight$, which controls the tradeoff between rate and power consumption.
Similarly, the performance of the \gls{rl} solution will change depending on the hyperparameters.

The different power allocation strategies result in the average \gls{sk} budget shown in \autoref{fig:results-budget}.
As expected, the budget remains approximately constant at the initial budget level~$\initbudget=\SI{70}{\bit}$ for the \texttt{Const.~Budget} scheme.
Similarly, the budget approaches a stable value of around~\SI{153}{\bit} with the adaptive power allocation.
For the three other schemes with a larger average power consumption, the average budget increases over time.

\begin{figure}
	\centering
	\begin{tikzpicture}
	\begin{axis}[
		betterplot,
		width=.91\linewidth,
		height=.29\textheight,
		xlabel={Time~$t$},
		ylabel={Average Budget~$\expect{\budget}$ [$\si{\bit}$]},
		xmin=0,
		xmax=2000,
		ymin=0,
		ymax=1000,
		legend pos=north west,
		extra y ticks={40},
		extra y tick labels={$\minbudget$},
		]
		\pgfplotstableread[x=time]{data/2024-03-13-01-25-PPO-2/budget-p0.35-B70-T2000.dat}\tbl

		\addplot+[mark repeat=80] table[x=time,y={model}] {\tbl};
		\addlegendentry{Model}

		\addplot+[mark repeat=80] table[x=time,y={fullPower}] {\tbl};
		\addlegendentry{Max.\ Power}

		\addplot+[mark repeat=80] table[x=time,y={adaptiveCondExpect}] {\tbl};
		\addlegendentry{Adaptive}

		\addplot+[mark repeat=80] table[x=time,y={constant10dB}] {\tbl};
		\addlegendentry{$\txpower=\SI{10}{\dB}$}

		\addplot+[mark repeat=80] table[x=time,y={constant}] {\tbl};
		\addlegendentry{Const.\ Budget}

	\end{axis}
\end{tikzpicture}
	\vspace{-.5em}
	\caption{%
		Average \gls{sk} budget~$\budget$ over time for a communication system with the parameters described in \autoref{sec:numerical-example}.
		\texttt{Model} refers to the trained \gls{rl} model, \texttt{Adaptive} is the algorithm introduced in \autoref{sub:adaptive}, and the remaining curves correspond to constant powers at different levels.
		The shown values are averaged over \num{2000}~\gls{mc} runs.}%
	\label{fig:results-budget}
\end{figure}

\begin{figure}
	\centering
	\begin{tikzpicture}
	\begin{axis}[
		betterplot,
		xlabel={Time~$t$},
		ylabel={Resilience Outage Probability~$\resiloutageprob$},
		xmin=0,
		xmax=2000,
		ymode=log,
		ymin=1e-4,
		ymax=1,
		legend pos=south east,
		]
		\pgfplotstableread[x=time]{data/2024-03-13-01-25-PPO-2/resilience_outage_prob-p0.35-B70-T2000.dat}\tbl

		\addplot+[mark repeat=80] table[x=time,y={model}] {\tbl};
		\addlegendentry{Model}

		\addplot+[mark repeat=80] table[x=time,y={fullPower}] {\tbl};
		\addlegendentry{Max.\ Power}

		\addplot+[mark repeat=80] table[x=time,y={adaptiveCondExpect}] {\tbl};
		\addlegendentry{Adaptive}

		\addplot+[mark repeat=80] table[x=time,y={constant10dB}] {\tbl};
		\addlegendentry{$\txpower=\SI{10}{\dB}$}

		\addplot+[mark repeat=80] table[x=time,y={constant}] {\tbl};
		\addlegendentry{Const.\ Budget}

	\end{axis}
\end{tikzpicture}\\
	\vspace{-.5em}
	\caption{%
		Resilience outage probability~$\resiloutageprob$ over time for a communication system with the parameters described in \autoref{sec:numerical-example}.
		\texttt{Model} refers to the trained \gls{rl} model, \texttt{Adaptive} is the algorithm introduced in \autoref{sub:adaptive}, and the remaining curves correspond to constant powers at different levels.
		The shown values are averaged over \num{2000}~\gls{mc} runs.}
	\label{fig:results-resilience-outage-prob}
\end{figure}

Since the budget is related to the resilience outage probability~$\resiloutageprob$ through the alert outage probability~$\alertoutprob$, it is difficult to estimate the resilience performance directly from \autoref{fig:results-budget}.
Therefore, we show the resilience outage probabilities~$\resiloutageprob$ over time in \autoref{fig:results-resilience-outage-prob}.
First, we can see that the maximum power allocation is a lower bound on the resilience outage probability as expected.
For this, the outage probability stabilizes at around~$10^{-2}$, which is consistent with the results derived in \autoref{sub:const-power}.
Next, it can be seen that the \gls{ml}-based \texttt{Model} scheme achieves the second highest resilience, i.e., the second lowest resilience outage probability~$\resiloutageprob$ which stabilizes at around~\num{0.0173}.
This shows that the \gls{ml} approach can save significant power compared to the maximum power allocation as it only uses around \SI{1.5}{\percent} of the power in the steady state (\SI{11.9}{\dB} vs.\ \SI{30}{\dB}) while achieving only a slightly higher resilience outage probability (\num{0.0173} vs.\ \num{0.01}).
Another interesting observation from \autoref{fig:results-resilience-outage-prob} is that the resilience outage probability approaches~\num{1} for the \texttt{Const.~Budget} scheme, even though the budget stays above the minimum budget~$\minbudget$ on average.
In contrast, the resilience outage probability for the \texttt{Adaptive} algorithm stabilizes at around~\num{0.035} far below one, even though the average budget also approaches a finite value.
Furthermore, a similar long-term resilience performance is achieved by a constant power allocation with ${\txpower=\SI{10}{\dB}}$ with a significantly increasing average budget.
The reason for these unintuitive results is that \autoref{fig:results-budget} only shows the \emph{average} budget for each power allocation scheme.
However, since the resilience outage probability~$\resiloutageprob$ is a probabilistic measure, the distribution of the budget has a significant impact on it.
In the \texttt{Const.\ Budget} scheme, the constant transmit power is selected such that the average budget stays constant.
However, since the evolution of the budget over time is a random walk, the variance of the distribution of budgets increases over time.
This point is further illustrated in \autoref{fig:results-budget-quantiles-const-budget}, where different quantiles of the distribution of the budget over time are shown for the \texttt{Const.\ Budget} scheme.
As expected, the average budget stays constant over time.
In contrast, the spread of the distribution increases, e.g., the \SI{97}{\percent} quantile increases over time while the \SI{3}{\percent} quantile hits zero after around 80~time slots.
With this plot, we can also better understand the results shown in \autoref{fig:results-resilience-outage-prob}, in particular why the resilience outage probability increases to values close to one over time.
The resilience outage probability~$\resiloutageprob$ and the budget~$\budget$ are connected through the alert outage probability~$\alertoutprob$, and it generally holds that~$\resiloutageprob$ increases if the probability of having a small budget increases.
This effect can be seen in \autoref{fig:results-budget-quantiles-const-budget} as even higher quantiles, e.g., the \SI{60}{\percent} and \SI{70}{\percent} quantiles decrease over time or even hit zero.
In particular, the \SI{60}{\percent} quantile reaches zero after around \num{1300}~time slots, i.e., in \SI{60}{\percent} of the cases, the system completely runs out of key bits after around \num{1300}~time slots which directly translates to a large resilience outage probability.
In contrast to this, the distribution of the budget for the \texttt{Adaptive} power allocation scheme is narrower such that even lower quantiles stay close to the average and above the minimum required budget, resulting in a smaller resilience outage probability.
Additionally, having a narrower budget distribution can be beneficial in the case that the memory for storing key bits is limited.

\begin{figure}
	\centering
	\begin{tikzpicture}
	\begin{axis}[
		betterplot,
		xlabel={Time~$t$},
		ylabel={Budget~${\budget}$ [$\si{\bit}$]},
		xmin=0,
		xmax=2000,
		ymin=0,
		ymax=600,
		legend pos=north west,
		extra y ticks={40},
		extra y tick labels={$\minbudget$},
		]
		\pgfplotstableread[x=time]{data/2024-03-13-01-25-PPO-2/budget-p0.35-B70-T2000.dat}\tbl
		\pgfplotstableread[x=time]{data/budget_quant-991-p0.35-B70-T2000.dat}\tba
		\pgfplotstableread[x=time]{data/budget_quant-982-p0.35-B70-T2000.dat}\tbb
		\pgfplotstableread[x=time]{data/budget_quant-973-p0.35-B70-T2000.dat}\tbc
		\pgfplotstableread[x=time]{data/budget_quant-9010-p0.35-B70-T2000.dat}\tbd
		\pgfplotstableread[x=time]{data/budget_quant-7525-p0.35-B70-T2000.dat}\tbe
		\pgfplotstableread[x=time]{data/budget_quant-6040-p0.35-B70-T2000.dat}\tbf
		\pgfplotstableread[x=time]{data/budget_quant-7030-p0.35-B70-T2000.dat}\tbg

		\addplot[mark repeat=80,mark=pentagon*, thick, mark options=solid,fill=plot4,draw=black] table[x=time,y={constant}] {\tbl};
		\addlegendentry{Average Budget}

		\addplot[name path=constantLower3,plot2,forget plot] table[x=time, y={constant3}] {\tbc};
		\addplot[name path=constantUpper97,plot2,forget plot] table[x=time, y={constant97}] {\tbc};
		\addplot[plot2,opacity=.15] fill between[of=constantLower3 and constantUpper97];
		\addlegendentry{\SIrange{3}{97}{\percent} quantile range}

		\addplot[name path=constantLower10,plot1,forget plot] table[x=time, y={constant10}] {\tbd};
		\addplot[name path=constantUpper90,plot1,forget plot] table[x=time, y={constant90}] {\tbd};
		\addplot[plot1,opacity=.15] fill between[of=constantLower10 and constantUpper90];
		\addlegendentry{\SIrange{10}{90}{\percent} quantile range}

		\addplot[name path=constantLower30,plot0,forget plot] table[x=time, y={constant30}] {\tbg};
		\addplot[name path=constantUpper70,plot0,forget plot] table[x=time, y={constant70}] {\tbg};
		\addplot[plot0,opacity=.15] fill between[of=constantLower30 and constantUpper70];
		\addlegendentry{\SIrange{30}{70}{\percent} quantile range}

		\addplot[name path=constantLower40,plot3,forget plot] table[x=time, y={constant40}] {\tbf};
		\addplot[name path=constantUpper60,plot3,forget plot] table[x=time, y={constant60}] {\tbf};
		\addplot[plot3,opacity=.15] fill between[of=constantLower40 and constantUpper60];
		\addlegendentry{\SIrange{40}{60}{\percent} quantile range}

	\end{axis}
\end{tikzpicture}
	\vspace{-.5em}
	\caption{Multiple quantiles and average of the budget~$\budget$ over time for the \texttt{Const.~Budget} power allocation scheme.}
	\label{fig:results-budget-quantiles-const-budget}
\end{figure}

\section{Conclusion}\label{sec:conclusion}
We have considered a wireless communication system, in which \gls{skg} is performed to generate key bits.
These key bits are used as one-time pads to protect messages from a passive eavesdropper.
It is therefore important to ensure that the system always has a sufficient amount of key bits available.
In this work, we have proposed resilience metrics for such communication systems with an \gls{sk} budget.
These metrics are an important step for quantifying the physical layer resilience of modern communication systems.
While the resilience outage probability presented in this work is specifically tailored to the survivability of a system with \gls{sk} budget, it can be used as a basis for exploring resilience on the physical layer in the future.

Furthermore, we have proposed multiple power allocation schemes and analyzed them.
For a constant transmit power, we have provided lower and upper bounds on the resilience outage probability, analyzed its behavior over time, and derived the long-term convergence.
Additionally, we have provided insights into the influence of the transmit power and how it can be minimized while ensuring a given resilience requirement.
With adaptive and machine learning-based solutions, we have shown that dynamic power allocation can reduce the consumed power while simultaneously achieving a similar or better performance than constant power allocation.

In this work, we have only focused on the normal operation state and the preparation for possible alert states.
In future work, it is therefore of interest to consider the other resilience phases and take the frequency of alert states into account.
Another interesting extension of this work will be considering correlation and general dependency structures~\cite{Besser2021tcom} between the legitimate and eavesdropper's channels.

\appendices
\section{Proof of \autoref{thm:bounds-resilience-outage-prob}}\label{app:proof-thm-bounds-resilience-outage-prob}
The probability of \emph{not} violating the resilience target is given as
\begin{align*}
	\Pr\left(\alertoutprob(t) \leq \alertoutprobmax\right) 
	&= \Pr\left(\budget(t) \geq \minbudget,\; \min_{0 \leq i \leq t}\budget(i) > 0\right)\\
	&= \Pr\left(\sum_{i=1}^{t} \netusage(i) \leq \initbudget-\minbudget,\; \min_{0 \leq i \leq t}\budget(i) > 0\right),
\end{align*}
which is the joint probability that the budget available at time~$t$ is larger than the required minimum~$\minbudget$ and that the system survived up to time~$t$.

This joint probability can be bounded by the Fr\'{e}chet bounds~\cite{Ruschendorf1981} as
\begin{align}
	\begin{split}
		\Pr&\left(\alertoutprob(t) \leq \alertoutprobmax\right)\\
		& \geq \positive{\Pr\big(\netsum(t) \leq \initbudget-\minbudget\big) + \underbrace{\Pr\left(\min_{0 \leq i \leq t}\budget(i)>0\right)}_{\probsurv_{\initbudget}(t)}-1}
	\end{split}\\
	&= \positive{\cdf_{\netsum(t)}\big(\initbudget-\minbudget\big) - \probruin_{\initbudget}(t)}%
\end{align}
and
\begin{align}
	\begin{split}
		\Pr&\left(\alertoutprob(t) \leq \alertoutprobmax\right)\\
		& \leq \min\left\{\Pr\big(\netsum(t) \leq \initbudget-\minbudget\big),\; \Pr\left(\min_{0 \leq i \leq t}\budget(i)>0\right) \right\}
	\end{split}\\
	&= \min\left\{\cdf_{\netsum(t)}\big(\initbudget-\minbudget\big),\; \probsurv_{\initbudget}(t)\right\}\,,
\end{align}
where we use the shorthand $\netsum(t) = \sum_{i=1}^{t} \netusage(i)$, and the fact that $\probruin_{\initbudget}(t) = 1-\probsurv_{\initbudget}(t)$.

With the relation $\Pr\left(\alertoutprob(t) \leq \alertoutprobmax\right) = 1-\Pr\left(\alertoutprob(t) > \alertoutprobmax\right)$, we obtain the statement of the theorem.

\section{Proof of \autoref{thm:long-term-resilience-outage}}\label{app:proof-cor-long-term-resilience-outage}
For $\probtx>\probtxcrit$, it follows from \autoref{lem:expected-net-claim-random} that $\expect{\netusage}>0$, i.e., more key bits are on average used in each time slot than generated.
From \cite[{Cor.~2}]{Besser2024SKGbudget}, it follows that the system will run out of \gls{sk} bits almost surely, i.e.,
\begin{equation*}
	\lim\limits_{t\to\infty}\probruin_{\initbudget}(t) = 1\,.
\end{equation*}
If we combine this with the bounds on the resilience outage probability~$\resiloutageprob$ from \autoref{thm:bounds-resilience-outage-prob}, it can be seen that both the lower bound and the upper bound converge to \num{1}, i.e., the actual resilience outage probability also converges to \num{1},
\begin{equation*}
	\lim\limits_{t\to\infty} \resiloutageprob(t) = 1\,.
\end{equation*}

Similarly, we have $\expect{\netusage}<0$ for $\probtx<\probtxcrit$.
From \cite[{Chap.~XII.2, Thm.~1}]{Feller1991}, it follows that the random walk $\netsum=\sum_{i=1}^{t}\netusage(i)$ drifts to $-\infty$.
Thus, the probability that $\netsum(t)$ is less than the finite value~$\initbudget-\minbudget$ approaches \num{1}, i.e.,
\begin{equation}
	\lim\limits_{t\to\infty} \Pr\left(\netsum(t)\leq\initbudget-\minbudget\right) = 1\,.
\end{equation}
Combining this with the bounds from \autoref{thm:bounds-resilience-outage-prob} yields
\begin{equation}
	\lim\limits_{t\to\infty} \probruin_{\initbudget}(t) \leq \lim\limits_{t\to\infty} \resiloutageprob(t) \leq \lim\limits_{t\to\infty} \probruin_{\initbudget}(t)\,,
\end{equation}
which corresponds to \eqref{eq:resil-outage-prob-long-term-small-p}.

\printbibliography[heading=bibintoc]

\end{document}